\documentclass[
    runinaddress,
    11pt,amsmath,amssymb,
]{revtex4-2}

\usepackage[polish,english]{babel}
\usepackage[utf8]{inputenc}
\usepackage[T1]{fontenc}
\usepackage{graphicx}
\graphicspath{{./}}
\usepackage{subcaption}
\usepackage{dcolumn}
\usepackage{bm}
\usepackage{hyperref}
\usepackage{amsmath}
\usepackage{amsthm}
\usepackage{multirow}
\usepackage{booktabs}
\usepackage{pict2e}
\usepackage{mathrsfs}
\usepackage{rotating}
\usepackage{setspace}

\usepackage{microtype} 
\captionsetup{justification=raggedright,singlelinecheck=false}

\widowpenalty10000
\clubpenalty10000

\begin{document}
\singlespacing

\theoremstyle{remark}
\newtheorem*{problem}{Problem}

\theoremstyle{plain}
\newtheorem{theorem}{Theorem}
\newtheorem{lemma}[theorem]{Lemma}
\newtheorem{corollary}[theorem]{Corollary}

\title{Prolonging The Inevitable: Maximising survival time of an engine-equipped spacecraft between spatial hypersurfaces, as applied to the Schwarzschild spacetime}

\author{Karol Urba\'nski}
\email{karol.j.urbanski@doctoral.uj.edu.pl}
\affiliation{Szkoła Doktorska Nauk Ścisłych i Przyrodniczych, and Institute of Physics, Jagiellonian University in Kraków.}
\begin{abstract}
    The fate of an astronaut unfortunate -- or foolish -- enough to find themselves hurtling towards spaghettification after passing the event horizon of a black hole is a common anecdote told by scientists to the regular population. However, despite the fact the Schwarzschild spacetime has been discovered over a century ago, the simple question of how long can such a space traveller live has not been fully elaborated on since. In fact, a few textbooks even give a mistaken or easily misread description of what happens. We address those inconsistencies. We calculate the proper time a space traveller equipped with means of propulsion can expect to live in these circumstances, giving analytical expressions (as elliptic integrals) wherever possible. We prove a principle that explains the best strategy to extend their life, and show its' generalisation for other spacetimes. Finally, we give quantitative answers to what gains due to optimal control can be expected in typical and somewhat `realistic' circumstances.
\end{abstract}

\keywords{relativity,geodesic,accelerated-motion,optimal-control,spacecraft,black-holes}
\maketitle

\section{\label{sec:past}Overview of scenario and previous literature}
\subsection{Preliminary information}
Almost immediately after Albert Einstein's groundbreaking discovery of the field equations of gravitation \cite{einstein1915field} that now bear his name, Karl Schwarzschild published \cite{schwarzschild1916gravitationsfeld} the first exact solution of these field equations -- a metric describing a vacuum spacetime with spherical symmetry that is identifiable with the gravitational field outside a mass $M$ in the absence of an electric charge:
\begin{equation}\label{eq:schw_line_element}
    ds^2 = - \left(1-\frac{2M}{r}\right) dt^2 + \left(1-\frac{2M}{r}\right)^{-1} dr^2 + r^2 d\Omega^2 
\end{equation}
where we've assumed geometric units $c = G = 1$, the metric signature $(-,+,+,+)$ and where $d\Omega^{2} = d\theta^{2} + \sin^{2}\theta d\phi^{2}$ is the metric line element of the two sphere. The coordinate ranges were originally $t \in (- \infty, + \infty), r > 2M, \theta \in (0, \pi), \phi \in (0, 2\pi)$.

At first, this description was considered valid only at a radius exceeding the value where the metric becomes singular $r = 2M$. However, as shown by many different authors researching many different phenomena (eg.\ Gullstrand-Painlev{\'e} coordinates \cite{painleve1921mecanique, gullstrand1922allgemeine}, Eddington-Finkelstein coordinates \cite{eddington1924comparison, finkelstein1958past}, Lema{\^\i}tre coordinates \cite{lemaitre1931expanding}, Kruskal-Szekeres coordinates \cite{kruskal1960maximal, szekeres1960singularities}, Synge extension \cite{synge1950gravitational}) it is possible to transform into a coordinate system in which the metric is regular everywhere except $r = 0$. It is also possible to show (first done by Robertson \cite{goenner1998expanding}) that a falling observer will cross the coordinate singularity at $r = 2M$ in finite proper time. In this paper, we will occassionally refer to Eddington-Finkelstein coordinates:

\begin{equation}\label{eq:ef_line_element}
    ds^2 = - \left(1-\frac{2M}{r}\right) dt^2 + \frac{4M}{r} dtdr + \left(1+\frac{2M}{r}\right) dr^2 + r^2 d\Omega^2
\end{equation}
with the same ranges for each coordinate, and Kruskal-Szekeres coordinates describing the maximally extended coordinates:
\begin{equation}\label{eq:ks_line_element}
    ds^2 = - \frac{32 M^3}{r} e^{-\frac{r}{2M}}(-dT^2+dX^2) + r^2 d\Omega^2,
\end{equation}
with coordinates ranging in $- \infty < X < \infty$, $- \infty < T^2 - X^2 < 1$.

The singularity at $r = 0$ appears regardless of the coordinate system used to describe the spacetime, and is a mathematical idealisation of a curvature singularity -- as can be verified by computing the Ricci scalar. Once a massive observer (or a photon) falls below the event horizon at $r = 2M$, it will inexorably reach the singularity \cite{misner1973gravitation}.

As general relativity evolved, it became clear that nothing in the theory prohibits the formation of an object described by \eqref{eq:ef_line_element}, \eqref{eq:ks_line_element} or \eqref{eq:schw_line_element}. The physical existence of black hole solutions was debated for a long time, to the point that in 1974 Stephen Hawking and Kip Thorne famously made a bet about whether convincing observational evidence would be found \cite{hawking1988}. Since then, enough evidence (for rotating black holes) been shown for the bet to be settled in the affirmative, culminating in imaging the shadow of the M87 black hole \cite{akiyama2021first}. A thorough history of the early history of GR can be found in the book \cite{goenner1998expanding}.

\begin{figure}[h]
    \includegraphics[width=0.9\linewidth]{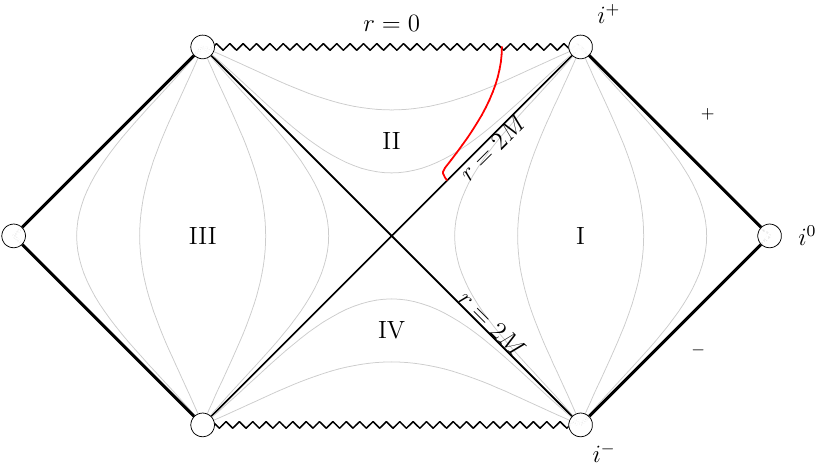}
    \caption{Carter-Penrose diagram for the maximally extended Schwarzschild spacetime. Region I is the asymptotically flat region above the event horizon, described with \eqref{eq:schw_line_element}. Region II is the region under the event horizon, described with \eqref{eq:schw_line_element_under_horizon}, and the subject of this paper. Regions III and IV are the analytic continuation, with the whole spacetime described by \eqref{eq:ks_line_element}; region III is the other asymptotically flat region in the analytic continuation, and region IV is commonly called a 'white hole' spacetime. The red worldline is a schematic example of an accelerated worldline which starts with some $e_0$ at the event horizon but, through acceleration by means of a rocket, soon starts travelling along a line of constant $t$.}
    \label{fig:carter_penrose}
\end{figure}      

The Schwarzschild metric in the region $0 < r < 2 M$ is a special case of the cosmological Kantowski-Sachs model \cite{kantowski1966some}, meaning that results obtained under the horizon of the Schwarzschild black hole are of interest in other contexts.

\subsection{The ``stop squirming'' fallacy}
With the reality of the existence of black holes firmly established, they have become an interesting feature of our current theories of nature in the eyes of laymen and fledgling physicists alike. Many scientific outreach efforts are therefore focused on particular properties of those compact objects. One natural question that arises is ``what would happen to me if I fell into a black hole?'' Indeed, one can find many references commenting on the fate of an unfortunate traveller on such a journey: the description of spaghettification by tidal forces \cite{hawking2009brief, misner1973gravitation} and how below the horizon, what was previously a spatial coordinate becomes a timelike coordinate.

However, on the topic of the amount of time that such a traveller would have to live, the textbooks are somewhat more elusive. Some directly show that the areal radial coordinate $r$ must decrease under the horizon, establishing the inevitability of reaching the singularity \cite{misner1973gravitation}. In others \cite{hartle2003gravity, carroll2004spacetime} an upper bound of proper time $\tau = M \pi$ is calculated. A select few comment further, and it is here where a misconception shows up.

In the excellent book \textit{Spacetime and Geometry} \cite{carroll2004spacetime}, we find the following quote: ``[...] once you enter the event horizon, you are utterly doomed. This is worth stressing; not only can you not escape back to [region above the event horizon], you cannot even stop yourself from moving in the direction of decreasing r, since this is simply the timelike direction. [...] \textbf{Since proper time is maximized along a geodesic, you will live the longest if you don't struggle, but just relax as you approach the singularity.}'' (emphasis added). Similarly, in the solution manual for the book \textit{Gravity} \cite{hartle2003gravity} we can find as an aside to a solution that shows the existence of an upper bound of proper time: ``One of the author's students characterized this result as `The more you struggle, the shorter your life.' ''

While these explanations are evocative, they are either misleading (in the case of \textit{Gravity}'s solutions manual), or wrong (in \textit{Spacetime and Geometry}'s case). The geodesic that reaches the aforementioned upper bound $\tau = M \pi$ is one that no astronaut crossing the event horizon will actually have. Astronauts falling in with some energy and angular momentum will (as we will show in this paper) have a shorter travel time towards the singularity. And while a timelike geodesic certainly maximizes proper time between two events in spacetime, the final destination of our journey (the singularity for $r$ tending to $0$) is not a single event, but a degenerate hypersurface! When we apply thrust using a propulsive rocket of some sort, we will deviate from our initial geodesic connecting $(t_0, 2M, \theta_0, \phi_0)$ and $(t_1, r \rightarrow 0, \theta_1, \phi_1)$. Since the new terminal event $(t_2, r \rightarrow 0, \theta_2, \phi_2)$ can be completely different, the curve that takes us there is no longer restricted from having longer total proper time than our initial geodesic.

This issue has previously been noticed in the paper \cite{lewis2007no}. In it, it is shown that with proper application of thrust in the case of radial infall, we can extend our time to live. However, that work stops at simply showing this fact numerically. In this paper, we will carry on this work, and show exactly what an astronaut equipped with a rocket engine should do; then, we will generalise the solution to apply to a broader class of spacetimes.

\subsection{Precise problem statement}\label{problem}
\begin{problem}
An astronaut, through an unfortunate accident or a grisly death wish, crosses the event horizon of a $M$ mass Schwarzschild black hole. At $r = 2M$, they have initial energy per unit mass $e_0$ and angular momentum per unit mass $L_0$. Their spacecraft is equipped with an engine capable of generating a 3-acceleration of magnitude $\alpha$ in the observer's instantaneous frame of reference. 

How should the astronaut fire the engines to travel on the worldline that maximalises proper time among all possible worldlines under these constraints, and how much proper time $\tau$ do they have?
\end{problem}

The quantities $e_0$ and $L_0$ will be further elucidated in chapter \ref{sec:geometry}; the definitions follow standards found in most relativity texts, such as in \cite{wald2010general, carroll2004spacetime, hartle2003gravity}, and are linked with the timelike and spacelike Killing vector fields. They are conserved along geodesics.

\subsection{Previous work in scientific literature}
The Lewis and Kwan paper \cite{lewis2007no} uses the Eddington-Finkelstein coordinates to show complete radial geodesics from rest can be improved by applying thrust. It extensively refers to Rindler's 1960 paper \cite{rindler1960hyperbolic} to perform calculations of accelerated motion.

A correct qualitative description of how to achieve maximal survival time under the event horizon, with no ambiguity that other types of infall result in longer times can be found in the book \cite{novikov1995black}.

Various papers from Toporensky and affiliated researchers Zaslavskii and Radosz deal with various interesting questions one can ask in this regime. The paper \cite{toporensky2020strategies} gives a first principle explanation using relative flow velocities to give a qualitative answer to the best strategies to maximise either survival time or how much of the outside universe can `contact' the infalling astronaut, establishing that in some situations these goals are not congruent with each other and require different strategies. The paper \cite{radosz2019inside} shows the peculiar velocity of a uniformly accelerated massive particle under the event horizon, as well as how the exchange of electromagnetic signals operates in that regime. The paper \cite{augousti2018speed} also touches on similar issues.

The paper \cite{toporensky2021flow} is a thorough description of geodesic motion, peculiar velocities, and redshifts inside the event horizon of a black hole. The paper \cite{toporensky2023delay} presents a pedagogical example showing momentary impulses can increase survival time, and time interval in the Lema{\^\i}tre frame. It is designed to be suitable as a pedagogical tool.

In the papers \cite{kostic2012analytical} and \cite{cieslik2022revisiting} a discussion of exact analytical expressions for infalling massive and massless particles undergoing geodesic motion is found.

Finally, the article \cite{radosz2023particle} establishes many results regarding particle collisions and kinematics under the event horizon. It also shows surprising links with cosmology and generalizes the Lema{\^\i}tre frame. It comments on issues regarding the stability around the singularity. Also, it establishes similar results around different singularities.

The paper \cite{PhysRevD.52.5832} describes the upper bounds for survival inside black holes in general terms.

The novel results in this paper will be the precise calculation of gains depending on initial parameters; the analytic expression for the case of radial infall; a principle that explains the optimal way to use an engine of finite power that has some application in other contexts; and a discussion of the viability of extending life under the horizon for astrophysical objects and physiologically reachable accelerations.

\subsection{Notes about the calculations performed for this paper}

Almost all calculations done in this text have been confirmed by direct numerical calculation using the equation of motion \eqref{eq:gen_eom}. The calculations were performed in Mathematica using the \textit{xAct} suite \cite{xact}. The notebooks used for the calculations and generation of the plots can be found in a public repository \cite{notebookcode} maintained by the author of this paper. The diagram \ref{fig:carter_penrose} was drawn with a modified version of the code in the paper \cite{olz2013conformal}.

\section{\label{sec:geometry}Analysis of the Schwarzschild spacetime in the region under the event horizon}
In this section we will analyse the Schwarzschild spacetime under the event horizon. To facilitate this analysis, we shall also briefly look into geodesic motion of massive and massless particles.
\subsection{The Schwarzschild spacetime and its' causal structure}
First off, we have to decide on the best coordinate chart for the problem at hand. In the paper \cite{lewis2007no}, the Eddington-Finkelstein coordinates in their original form \eqref{eq:ef_line_element} are used. Their main advantage is that they provide a chart that penetrates the event horizon at $r = 2M$ cleanly. In addition, far away from the Schwarzschild black hole the coordinates are clearly just the Minkowski metric $\eta_{\mu\nu}$. However, the presence of the cross-term $drdt$ means that we do not have a clear separation of the timelike and spacelike coordinates of the metric. We also do not obtain a full analytic continuation. They are a natural choice when considering geodesics that cross the horizon.

When we want to more closely analyse the causal structure of the spacetime, the Kruskal-Szekeres coordinates \eqref{eq:ks_line_element} provide a natural framework for this. Not only do they describe the maximal analytic continuation in one chart, the coordinates are well suited for drawing light cones. Nevertheless, they have the obvious drawback of being much more complicated to perform calculations in, and in many ways it can be difficult to understand exactly what's happening on a kinematical level. We will therefore avoid them -- however, the coordinates based on them will be of use when we analyse the causal structure on a Carter-Penrose diagram.

For completeness, we can use any of the horizon-penetrating coordinates to `pass' from above to below the horizon in an infinitesimally small region around the event horizon.

Since Schwarzschild coordinates \eqref{eq:schw_line_element} are perfectly valid everywhere except for the event horizon, curves described with them can asymptotically reach the event horizon both from above and below the horizon. Looking at the problem statement \ref{problem}, we can see that we're only interested in the spacetime region from $2M \geq r > 0$. The only points at which we cannot use Schwarzschild coordinates are on the hypersurface $2M$. But, any massive test particle's curve can only fall towards the singularity starting from that point; therefore, any curve $\gamma$ describing the path of a massive test particle only has a single point on the surface $2M$. Removing this measure zero set from a continuous curve will not impact the results. We can safely use standard Schwarzschild coordinates, and perform integration in them up to $2M$. They are a natural choice of coordinates for the problem. It is however worth noting that the hypersurface $2M$ is degenerate in this chart, which corresponds to the possibility of infall from two possible regions of the maximally extended spacetime; this will turn out to be intimately related with the sign of the constant of motion representing kinetic energy per unit mass.

Since we are in the interval $r \in (0, 2\pi)$ and $r$ is the timelike coordinate, it is natural for us to switch to using the metric in the form
\begin{equation}\label{eq:schw_line_element_under_horizon}
    ds^2 =  - \left(\frac{2M}{r} - 1\right)^{-1} dr^2 + \left(\frac{2M}{r} - 1\right) dt^2 + r^2 d\Omega^2,
\end{equation}
with coordinates other than $r$ having the range $0 < r < 2M$ and the other coordinates the same as in \eqref{eq:schw_line_element}. Underneath the horizon in these coordinates, $-r$ is the time function (fulfills the requirement that $\nabla f$ is timelike past pointing) \cite{hawking1969existence}. From now on we will reorder coordinates to the form $(r, t, \theta, \phi)$ so that they conform to the metric signature $(-,+,+,+)$. 

We will be describing a curve parametrized by proper time $\tau$: $x^\mu(\tau) = x^\mu = (r(\tau), t(\tau),\theta(\tau),\phi(\tau))$, with associated 4-velocity $u^{\mu}(\tau) = u^{\mu} = \frac{dx^{\mu}}{d\tau}$. 

In Schwarzschild coordinates, the spacetime structure can be read from the line element. We have a total of $4$ Killing vector fields: $\partial_t$ due to staticity, and the three-element rotation group on the $2-$sphere $SO(3)$ associated with the spherical symmetry. We will use spherical symmetry to position our curves in the equatorial plane $\theta(\tau) = \frac{\pi}{2}$ in accordance with the standard methods of simplifying the analysis of geodesics that can be found eg.\ in the references \cite{carroll2004spacetime, wald2010general, cieslik2022revisiting, kostic2012analytical}. Some commentary will be necessary when we get to accelerated motion, but for now this is a good simplifying assumption. The remaining two Killing fields are $\partial_t$ and $\partial_\phi$. Associated with them we have two constants of geodesic motion
\begin{equation}\label{eq:energy_per_unit_mass}
    - u^{\mu}{(\partial_t)}_{\mu} = \left(\frac{2M}{r} - 1\right)^{-1} u^t = e,
\end{equation}
and
\begin{equation}\label{eq:angular_momentum_per_unit_mass}
    u^{\mu}{(\partial_\phi)}_{\mu} = \frac{u^\phi}{r^2} = L.
\end{equation}

The quantity $L$ in equation \eqref{eq:angular_momentum_per_unit_mass} has the interpretation of angular momentum per unit mass. It is the degree of departure from a radial geodesic. Particles with sufficiently high values of $L$ are very unlikely to end up falling into the event horizon, as they will tend to orbit the black hole or escape hyperbolically to infinity \cite{carroll2004spacetime, wald2010general}. This quantity is manifestly covariant, and must be preserved under geodesic motion in the entire region $r>0$ of the manifold, including while crossing the horizon.

The quantity $e$ in equation \eqref{eq:energy_per_unit_mass} has the simplest interpretation when radial infall is considered. When our particle falls starting from rest at infinity, it has the energy per unit mass $e = 1$. If one were to start from `rest' at the event horizon $2M$, one would obtain $e = 0$. We could equally define it as being measured in spatial infinity. It is also manifestly covariant. It appears that in the Schwarzschild spacetime, we cannot sensibly define a $e < 0$, since one could just perform the transformation $t' = -t$ and `flip' the sign without changing the physical interpretation. The sign of $e$ is physically meaningful for a worldline, and negative values correspond to worldlines falling in from the second asymptotically flat region in the analytic continuation of the Schwarzschild spacetime. Imagine you are falling from rest at infinity, so $e = 1$ and $L = 0$. Let's assume you are falling feet down -- prior to crossing the horizon your feet are closer to the black hole, while your head is farther away.

In Eddington-Finkelstein coordinates, the quantity $e$ is equal to $e = g_{tt} u^t + g_{tr} u^r$ \cite{lewis2007no}, and is conserved in the region $r > 0$, including during the crossing of the event horizon. In Schwarzschild coordinates just above the horizon, our 4-velocity $u^{\mu}$ has a timelike component and a 3-velocity directed purely in the $\partial_r$ direction, by \eqref{eq:energy_per_unit_mass} and \eqref{eq:angular_momentum_per_unit_mass}. Since the quantity $L$ is a constant of motion across the entire spacetime and must remain $0$, the only spatial dimension after crossing the horizon in which there can exist displacement between your feet and your head is the one associated with the Schwarzschild coordinate $t$.

By the above argument, under the horizon our 3-velocity must be directed in the $\partial_t$ direction. Looking at the Carter-Penrose diagram \ref{fig:carter_penrose}, the $t$ coordinate in the region II is a spatial coordinate going from $+ \infty$ on the left to $- \infty$ on the right. Falling in from the right gives a decreasing $e$ and from the left an increasing $e$. Therefore, when we're considering a curve through this spacetime, there is a definite meaning to the $\partial_t$ coordinate's direction: a negative $e$ corresponds to a worldline of an observer falling in from the second asymptotically flat region in the analytic continuation.

Under accelerated motion, the values of $e$ and $L$ are no longer restricted to be constant. Using engines, we can change the geodesic we are on to one with different constants of motion. We can aim our engines parallel to the $\partial_t$ direction to change the value of $e$. The direction we should aim at has a definite meaning regardless of the asymptotically flat region we fall in from: aim the engines towards our feet to reduce the absolute value of $e$, and aim towards our head to increase it.

\subsection{Geodesic motion and survival time}

Geodesics in Schwarzschild spacetime that penetrate the horizon have been analysed in multiple sources, (for example, the papers \cite{cieslik2022revisiting, kostic2012analytical}), so we will not reproduce their findings here. Our analysis of accelerated motion will include geodesics as a special subcase, for $a^\mu = 0$.

\begin{figure}[t]
    \includegraphics[width=0.7\linewidth]{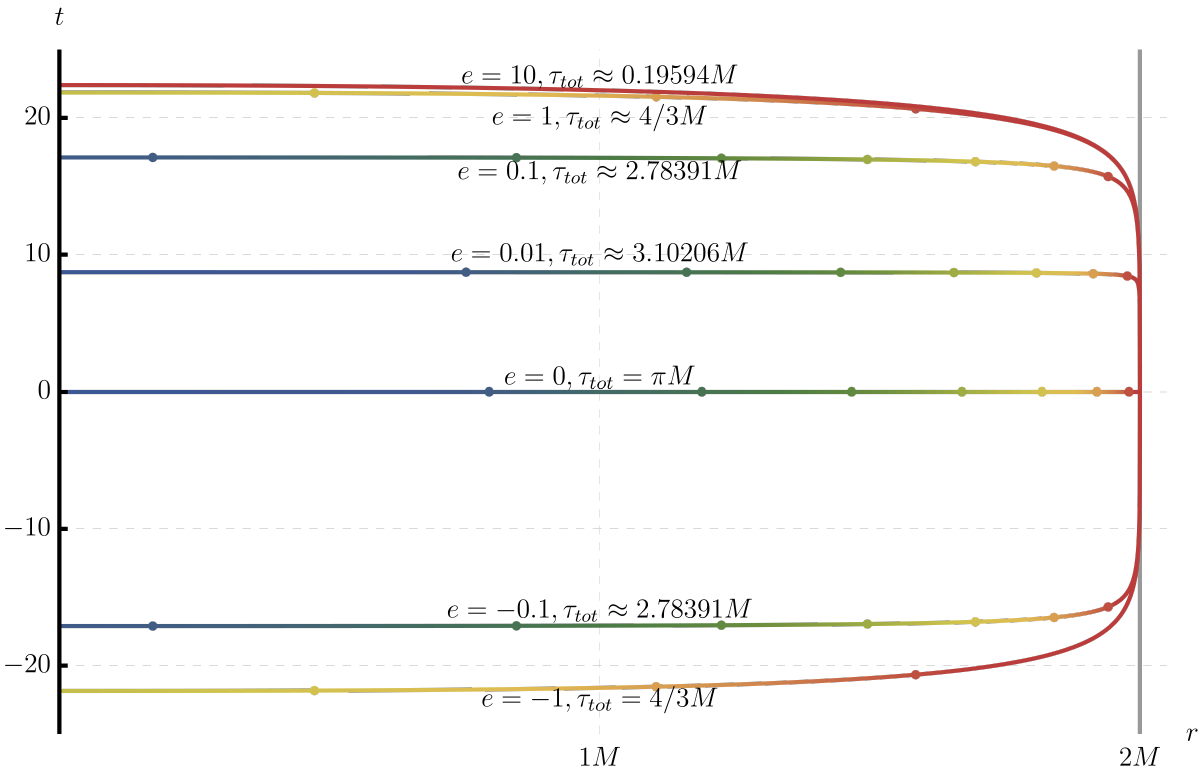}
    \caption{The geodesics of the radial infall $L_0 = 0$ of massive particles into a black hole of arbitrary mass $M$ in geometric units, starting $5*10^{-5}M$ under the event horizon. $t$ is chosen to start at $0$. Colors of the individual curves go from red at early proper times to blue at late proper times; the individual dots on each geodesics signify $\frac{\pi M}{8}$ intervals as measured on the clock carried by the falling observer. Clearly visible is the different total proper time $\tau_{tot}$ spent under the event horizon before hitting the singularity, depending on the initial constant $e_0$. Coordinates $r$ (x axis) and $t$ (y axis) are the Schwarzschild chart \eqref{eq:schw_line_element}. The direct geodesic \eqref{eq:maximal_geodesic_time} with $e_0 = 0$ survives the longest.}
    \label{fig:geodesics:e}
\end{figure}      

Of great interest to us, however, is the length of a geodesic. The length of the geodesic integrated across the monotonically decreasing coordinate $r$ is \cite{hartle2003gravity}
\begin{equation}\label{eq:geodesic_proper_time}
    \tau = \int_{0}^{2M} dr \left[ e^2 +\left( 1 + \frac{L^2}{r^2} \right) \left(\frac{2M}{r} - 1\right) \right]^{-\frac{1}{2}}.
\end{equation}
This time is maximised when $e = L = 0$ for all $r < 2M$, and is equal to
\begin{equation}\label{eq:maximal_geodesic_time}
    \tau_{\textrm{opt}} = \int_{0}^{2M} dr \left[ \frac{2M}{r} - 1 \right]^{-\frac{1}{2}} = M \pi.
\end{equation}

This geodesic is what is often calculated as the upper bound of survival time \cite{hartle2003gravity, carroll2004spacetime}, but we note that it is a specific geodesic that \textbf{no observer falling from above the event horizon will actually be on}: by definition, at least $e$ will be non-zero in this case. Only through carefully thrusting with nearly all our might at the event horizon can we asymptotically approach this geodesic. However, it is the maximal time an observer can survive in the interior of the event horizon. We will refer to this geodesic as the `optimal geodesic' or `maximal geodesic', and to observers on such a geodesic as `optimal inertial observers` in the rest of the article.

Figure \ref{fig:geodesics:e} shows infall that's purely radial with various $e$; figure \ref{fig:geodesics:l} shows infall that starts as equatorial with various $L$ and $e = 0$. We can see the most direct path towards the singularity is the longest living one, and with higher values of $e$ and $L$ we can lower the survival time at our leisure.

Before we get into analysing this in more depth, let's assume we possess an engine of arbitrary power. It is quite obvious that if we have access to that, the correct strategy to maximise lifetime is to momentarily pulse the engine to kill the angular momentum and the kinetic energy -- to arrive at the maximal geodesic. This thrust should work against the direction of our 3-velocity in relation to the frame of a hypothetical optimally falling observer \eqref{eq:maximal_geodesic_time}. Since the maximal geodesic has the same characteristic regardless of current $r$ coordinate, we can further surmise that the maximal geodesic is the goal we're trying to reach at all times $r$. This argument has been elaborated on more in papers \cite{toporensky2020strategies, toporensky2023delay}.

We can also look at the symmetry of $e$ in regards to sign again: if we thrust to lower ourselves to the maximal geodesic and then keep thrusting, we will actually build up the absolute value of $e$ again. This way, we revert to a symmetric situation of infall from the second asymptotically flat region in the analytic continuation. This means that at some point, we should turn the engine off to avoid overshooting the maximal geodesic.

\begin{figure}[b]
    \includegraphics[width=0.7\linewidth]{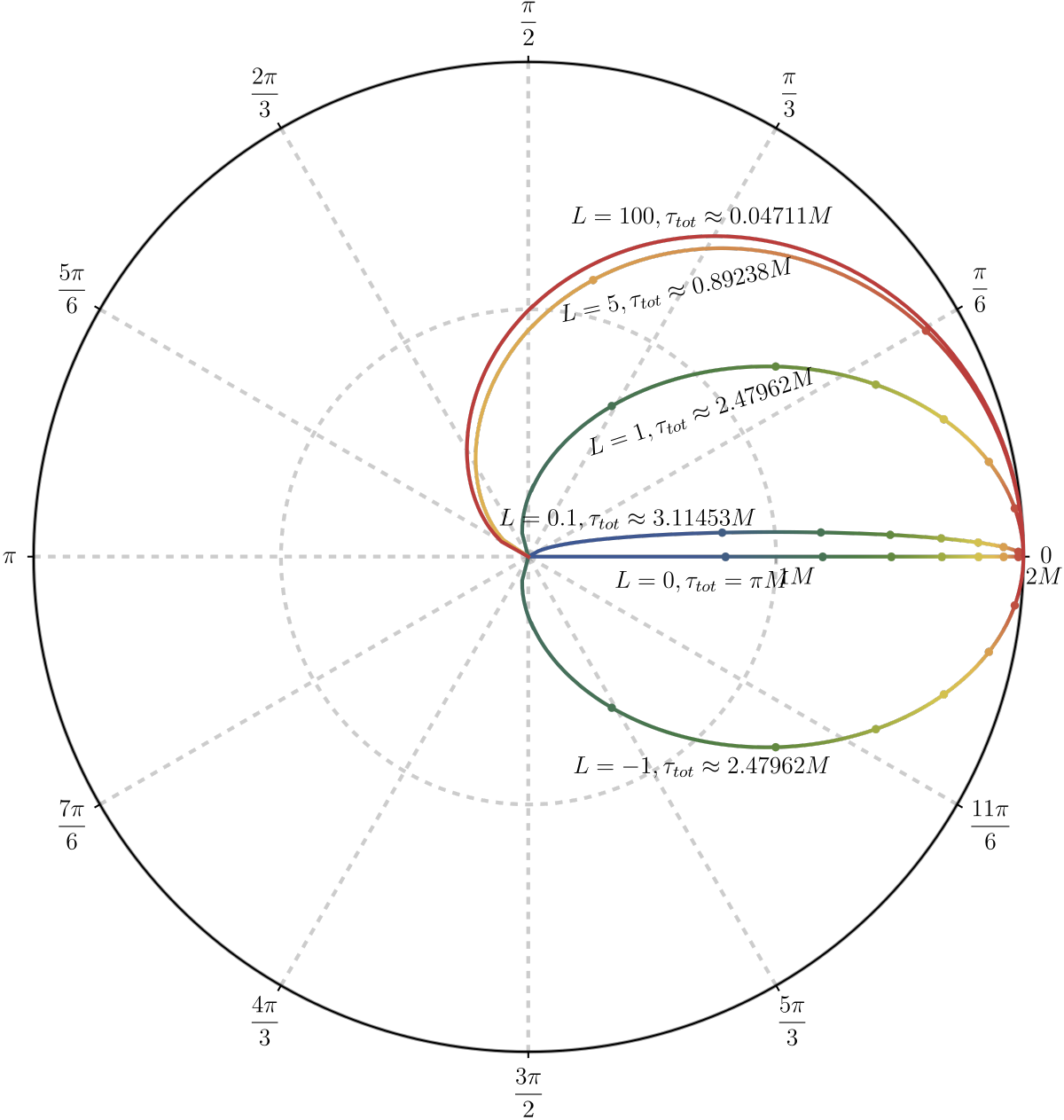}
    \caption{The geodesics of the infall with $e_0 = 0$ of massive particles into a black hole of arbitrary mass $M$ in geometric units, starting $5*10^{-5}M$ under the event horizon. Colors of the individual curves go from red at early proper times to blue at late proper times; the individual dots on each geodesics signify $\frac{\pi M}{8}$ intervals as measured on the clock carried by the falling observer. Clearly visible is the different total proper time $\tau_{tot}$ spent under the event horizon before hitting the singularity, depending on the initial constant $L_0$. Coordinates $r$ and $\phi$ are the Schwarzschild chart \eqref{eq:schw_line_element}. The direct geodesic \eqref{eq:maximal_geodesic_time} with $L_0 = 0$ is the longest.}
    \label{fig:geodesics:l}
\end{figure}      

\section{\label{sec:motion}Analysis of accelerated motion under the event horizon}
The idea of a momentary impulse engine has allowed us to show that using engines is a viable strategy, and the maximal geodesic is the target we want to reach. However, we have no quantitative information about our expected lifetime based on parameters, nor is such an engine a physically viable thing. In this section, we will strive to describe accelerated motion in full, beginning with special cases and moving towards the general case.

Under accelerated motion, particles obey the equation of motion
\begin{equation}\label{eq:gen_eom}
    a^{\mu} = u^{\alpha} u^{\mu}_{;\alpha} = \frac{du^{\mu}}{d\tau} + \Gamma^{\mu}_{\alpha\beta} u^{\alpha} u^{\beta}
\end{equation}
where the semicolon represents the covariant derivative, $\Gamma^{\mu}_{\alpha\beta}$ are the Christoffel symbols, and $a^{\mu}$ is the 4-acceleration. As we can see, 4-acceleration represents the degree of departure from simple geodesic motion. Because of that, it is constrained by orthogonality with the 4-velocity 
\begin{equation}\label{eq:a_orthogonal}
    a^{\alpha} u_{\alpha} = 0,
\end{equation}
and the 4-acceleration is normalised such that its' square
\begin{equation}\label{eq:a_norm}
    a^{\alpha} a_{\alpha} = \alpha^2
\end{equation}
is the squared magnitude of the acceleration felt by the observer in their instantaneous inertial frame.

We also have the normalization of the 4-velocity
\begin{equation}\label{eq:u_norm}
    u^{\alpha} u_{\alpha} = -1.
\end{equation}

We use two of the three Killing vectors on the 2-sphere to analyse motion in the equatorial plane $\theta(\tau) = \frac{\pi}{2}$. While nothing stops us from actually deviating from the equator using our engine, deflecting away from the equator will always worsen our situation: we'll spend valuable engine power changing the inclination of our trajectory instead of arresting excess momentum keeping us off the optimal path. We shall stick to the equator.

This means at every event along a worldline we have 6 variables (three components of 4-velocity, three components of 4-acceleration) and only three constraints \eqref{eq:u_norm}, \eqref{eq:a_norm}, \eqref{eq:a_orthogonal} -- since the constants of motion $e(\tau)$ and $L(\tau)$ are no longer unchanging. Along with the two variables defining initial conditions $e_0 = e(0)$ and $L_0 = L(0)$, that leaves us a degree of freedom -- the way we are going to control our spacecraft. 

\subsection{Frame field of optimal fall}\label{section:framefield}
Our further arguments will make extensive use of a particular frame field. As shown in \ref{fig:geodesics:e} and \ref{fig:geodesics:l}, the geodesic which achieves the maximal survival time \eqref{eq:maximal_geodesic_time} must have $L = e = 0$. An observer with that geodesic at $t = 0, \phi = 0, \theta = \frac{\pi}{2}$ has the 4-velocity
\begin{equation}\label{eq:u_opt}
    u_{\textrm{opt}}^\mu = (- \sqrt{\frac{2M}{r} - 1},0,0,0).
\end{equation}

Since this observer is in free fall towards the singularity, at every point of their journey they carry an inertial frame aligned with the vectors $\partial_t, \partial_\phi$ and $\partial_\theta$. It is in this frame that the measurement of the values of $L$ and $e$ of an observer on a worldline crossing the frame can be performed.

Under the horizon, the Killing vector field $\partial_t$ becomes a spacelike Killing field. $\partial_\phi$ is also a spacelike Killing field, and the 2-sphere has two further Killing vectors which allow us to arbitrarily rotate our coordinate chart. Using those fields on our optimal observer, we can translate his position in the coordinate $t$ and rotate freely in the coordinates $\partial_\phi$ and $\partial_\theta$ using passive transformations. In this way, we can span the entire manifold under the horizon by a frame field of optimal observers towards the singularity. Thus, using optimal observers, we obtain a well defined frame at every point in the region $r < 2M$.

\subsection{Purely radial infall}
The situation which has been most studied in literature is the case of radial infall. In this situation, we can without loss of generality assume the coordinate $\phi = 0$ across the entire worldline. This means $L_0 = 0$ and we are already on the maximal geodesic with $u^\phi = 0$. We don't need to apply any acceleration along $\partial_\phi$, so $a^\phi = 0$. Eliminating these variables means we have 4 variables for our three constraints and initial condition $e_0 = e(0)$, meaning the system has an explicit solution.

The equations of motion for this system are
\begin{equation}\label{eq:eom_radial}
    \begin{split}   
        - t'' & = \frac{2M r' t'}{r (-2M + r)} + \frac{\alpha r^{3/2} r'}{\sqrt{(2M - r) \left[ r^2 r'^2 - (2M - r)^2 t'^2 \right]}} \\
        - r'' & = - \frac{M r'^2}{r (-2M + r)} + \frac{M(-2M + r) t'^2}{r^3} + \frac{\alpha (-2M + r) t'}{\sqrt{r(2M - r) \left[ r^2 r'^2 - (2M - r)^2 t'^2 \right]}}
    \end{split}
\end{equation}
where the argument $\tau$ of the parametrized coordinates has been suppressed for brevity. The $'$ represents differentiation over the variable $\tau$. This system of equations can be integrated numerically, and we have used it to verify further results. The initial conditions can be set arbitrarily close to $r(0) = 2M$ at $t(0) = 0$, with $t'(0)$ calculated from \eqref{eq:energy_per_unit_mass} and $r'(0)$ from \eqref{eq:u_norm}.

\begin{figure}[t]
    \includegraphics[width=0.7\linewidth]{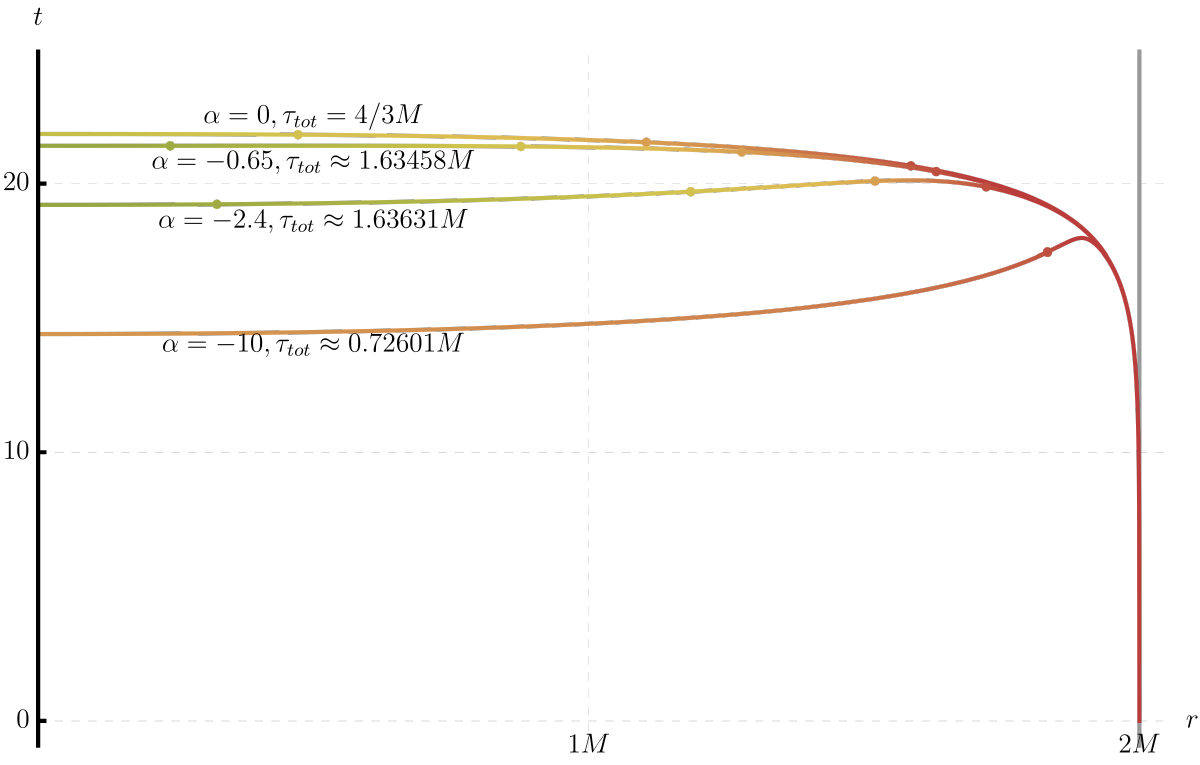}
    \caption{Curves of accelerated motion after infall with $L_0 = 0, e_0 = 1$ for various values of acceleration $a$. Excessive accelerations result in overshooting the perfect geodesic and thus in lowering the time survived; however, some values of $a$ extend the observer's life.}
    \label{fig:curves:e}
\end{figure}      

While the value $e(\tau)$ is no longer constant, in every instantenous frame associated with the frame field of optimal infall it retains its definition \eqref{eq:energy_per_unit_mass}. Taking the absolute derivative of \eqref{eq:energy_per_unit_mass} written as a tensor equation with respect to proper time gives
\begin{equation}
    \frac{De}{d\tau} = \frac{D}{d\tau} (u_\mu (\partial_t)^\mu) = \frac{D u_\mu}{d\tau} (\partial_t)^\mu + \frac{D (\partial_t)^\mu}{d\tau} u_\mu.
\end{equation}
$e$ on the left hand side is a scalar, so the left hand side is the normal derivative; the second term on the right hand side is $0$ by the constancy of $\partial_t$, while the absolute derivative in the first term on the right hand side written out using components is simply $g_{tt} a^t$, from equation \eqref{eq:gen_eom}. This gives
\begin{equation}
    \frac{de}{d\tau} = g_{tt} a^t.
\end{equation}
The left hand side can be expanded with 
\begin{equation}\label{eq:dedr_lhs}
    \frac{de}{d\tau} = \frac{de}{dr} \frac{dr}{d\tau} = \frac{de}{dr} u^r.
\end{equation}
Meanwhile, for the right hand side, we use \eqref{eq:a_norm}, \eqref{eq:u_norm} and \eqref{eq:a_orthogonal} to express $a^t$ in terms of $\alpha$ and $u^r$ (choosing sign so that it matches our choice of the definition of $e$ being lowered by negative $\alpha$):
\begin{equation}\label{eq:dedr_rhs}
    a^t = - \alpha \sqrt{\frac{r}{2M - r} + (u^t)^2} = - \alpha \sqrt{g_{tt}^{-1}} \sqrt{1 + (u^t)^2 g_{tt}} = - \alpha g_{tt}^{-1} u^r.
\end{equation}

Combining \eqref{eq:dedr_lhs} and \eqref{eq:dedr_rhs} all together, we get
\begin{equation}\label{eq:dedr}
    \frac{de}{dr} = - \alpha,
\end{equation}
where all metric components have cancelled each other out due to the fact $g_{tt} = g_{rr}^{-1}$.

The equation \eqref{eq:dedr} can be integrated with initial value $e(2M) = e_0$ to obtain the linear relation between acceleration applied in the $\partial_t$ spatial direction and the value (now in terms of the $r$ coordinate)
\begin{equation}\label{eq:e_of_r}
    e(r) = - \alpha r + 2M \alpha + e_0.
\end{equation}

Not only does this confirm the findings of the paper \cite{lewis2007no}, it allows us to use equation \eqref{eq:geodesic_proper_time} in a new context. Replacing $e$ with $e(r)$, we get
\begin{equation}\label{eq:tau_radial}
    \tau = \int_{0}^{2M} dr \left[ \left( -\alpha r + 2M + e_0 \right)^2 - \left(1 - \frac{2M}{r} \right) \right]^{-\frac{1}{2}},
\end{equation}
where the right side is wholly in terms of $r$. We can perform a substitution $u \to r^{-1}$ to obtain
\begin{equation}\label{eq:tau_r_sub1}
    \tau = \frac{1}{\sqrt{2M}} \int^{\infty}_{\frac{1}{2M}} \frac{du} {u \sqrt{u^3 + (2 \alpha^2 M + 2 \alpha e_0 + \frac{{e_0}^2}{2M} - \frac{1}{2M}) u^2 - (2 \alpha^2 + \alpha \frac{e_0}{2M}) u + \frac{\alpha^2}{2M} }}.
\end{equation}

This is an elliptic integral \cite{byrd2013handbook}. We will represent it in the Carlson symmetric elliptic integral form \cite{carlson1977special,NIST:DLMF}, since those have the advantage of being symmetric and thus easier in analysis, contain fewer branch cuts and behave better computationally \cite{Carlson_1995}.

The function under the square root in the denominator of \eqref{eq:tau_r_sub1} has the following roots:
\begin{equation}\label{eq:tau_r_roots}
    \begin{split}
        u_1 = &\frac{\sqrt[3]{\sqrt{\left(\frac{27 a^2}{2 M}-9 C D+2 D^3\right)^2+4 \left(3 C-D^2\right)^3}-\frac{27 \alpha^2}{2 M}+9 C D-2 D^3}}{3 \sqrt[3]{2}}\\
        + &\frac{\sqrt[3]{2} \left(D^2-3 C\right)}{3 \sqrt[3]{\sqrt{\left(\frac{27 \alpha^2}{2 M}-9 C D+2 D^3\right)^2+4 \left(3 C-D^2\right)^3}-\frac{27 \alpha^2}{2 M}+9 C D-2 D^3}}-\frac{D}{3} \\
        u_2 = &\frac{\left(-1+i \sqrt{3}\right) \sqrt[3]{\sqrt{\left(\frac{27 \alpha^2}{2 M}-9 C D+2 D^3\right)^2+4 \left(3 C-D^2\right)^3}-\frac{27 \alpha^2}{2 M}+9 C D-2 D^3}}{6 \sqrt[3]{2}}\\ 
        + &\frac{\left(1+i \sqrt{3}\right) \left(3 C-D^2\right)}{3\ 2^{2/3} \sqrt[3]{\sqrt{\left(\frac{27 \alpha^2}{2 M}-9 C D+2 D^3\right)^2+4 \left(3 C-D^2\right)^3}-\frac{27 \alpha^2}{2 M}+9 C D-2 D^3}}-\frac{D}{3} \\
        u_3 = &-\frac{\left(1+i \sqrt{3}\right) \sqrt[3]{\sqrt{\left(\frac{27 \alpha^2}{2 M}-9 C D+2 D^3\right)^2+4 \left(3 C-D^2\right)^3}-\frac{27 \alpha^2}{2 M}+9 C D-2 D^3}}{6 \sqrt[3]{2}} \\ 
        + &\frac{\left(1-i \sqrt{3}\right) \left(3 C-D^2\right)}{3\ 2^{2/3} \sqrt[3]{\sqrt{\left(\frac{27 \alpha^2}{2 M}-9 C D+2 D^3\right)^2+4 \left(3 C-D^2\right)^3}-\frac{27 \alpha^2}{2 M}+9 C D-2 D^3}}-\frac{D}{3},
    \end{split}
\end{equation}
where $C = (2 \alpha^2 + \alpha \frac{e_0}{2M})$ and $D = (2 \alpha^2 M + 2 \alpha e_0 + \frac{{e_0}^2}{2M} - \frac{1}{2M})$. 

Equation \eqref{eq:tau_r_sub1} can be represented using \eqref{eq:tau_r_roots} as
\begin{equation}\label{eq:tau_r_sub2}
    \tau = \frac{1}{\sqrt{2M}} \int^{\infty}_{\frac{1}{2M}} \frac{du} {u \sqrt{(u - u_1)(u - u_2)(u - u_3)}}.
\end{equation}

To obtain a Carlson symmetric form elliptic integral, we only need to bring the lower bound to $0$ by substitution $u' \to u - \frac{1}{2M}$:
\begin{equation}\label{eq:tau_r_carlsonj}
    \begin{split}
        \tau & = \frac{1}{\sqrt{2M}} \int^{\infty}_{0} \frac{du'} {\left(u' + \frac{1}{2M}\right) \sqrt{\left(u' - u_1 + \frac{1}{2M}\right)\left(u' - u_2 + \frac{1}{2M}\right)\left(u' - u_3 + \frac{1}{2M}\right)}} \\ 
             & = \frac{2}{3} \frac{R_J\left(\frac{1}{2M} - u_1,\frac{1}{2M} - u_2,\frac{1}{2M} - u_3,\frac{1}{2M}\right)}{\sqrt{2M}},
    \end{split}
\end{equation}
which is the analytic expression for the proper time along the accelerating curve from $r = 2M$ to $r = 0$ in infall with no angular momentum. The integral \eqref{eq:tau_r_carlsonj} can be represented in Legendre form integrals \cite{NIST:DLMF}:
\begin{equation}\label{eq:tau_r_legendre}
    \tau = \frac{2}{\alpha_L^2 \sqrt{2M}} \frac{\Pi(\phi_e,\alpha_L^2,k) - F(\phi_e,k)}{(u_3 - x)^\frac{3}{2}},
\end{equation}
where $\alpha_L^2 = \frac{u_3 - \frac{1}{2M}}{u_3 - u_1}$, $k = \sqrt{\frac{u_3-u_2}{u_3-u_1}}$ and $\phi_e = \arccos(\sqrt{\frac{u_1}{u_3}})$. We will stick to the form \eqref{eq:tau_r_carlsonj}, as it is better behaved.

\begin{figure}[t]
    \includegraphics[width=0.7\linewidth]{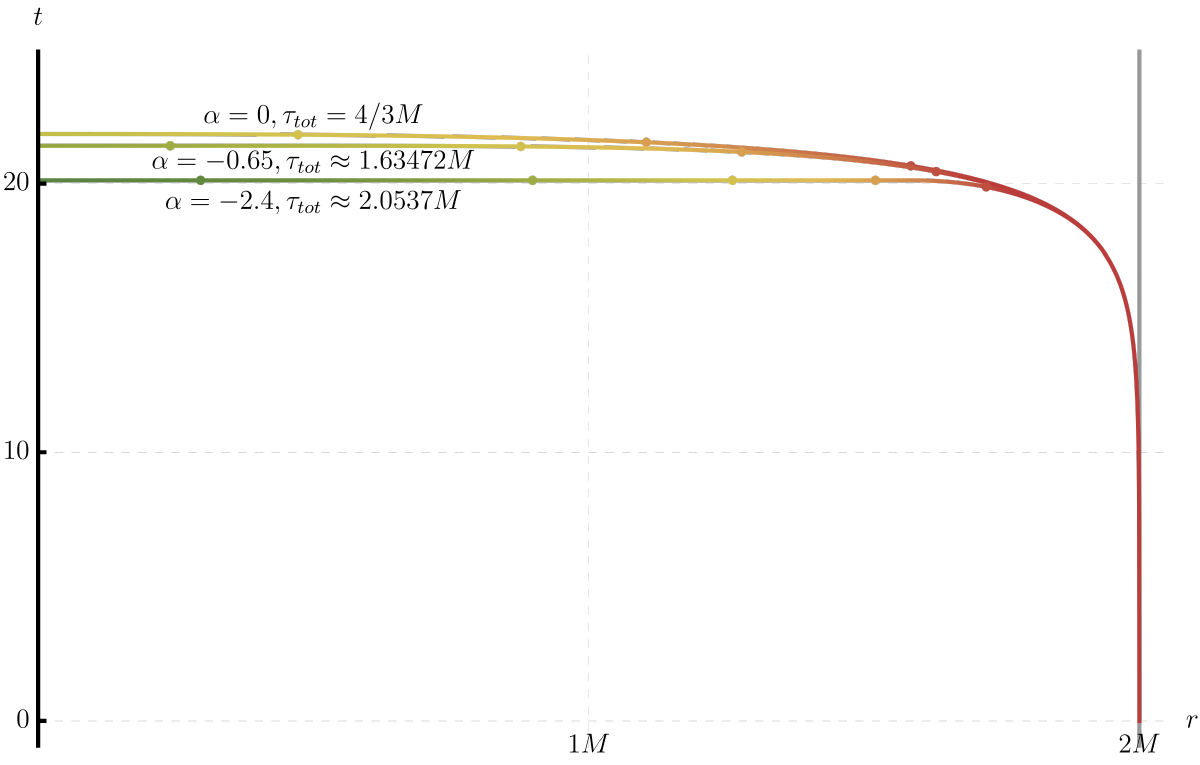}
    \caption{Curves of accelerated motion after infall with $L_0 = 0, e_0 = 1$ for various values of acceleration $\alpha$. If we reach the optimal geodesic, we cease accelerating further, and this way, the more acceleration we have, the more we can live. Compare with figure \ref{fig:curves:e}.}
    \label{fig:curves:e_optimal}
\end{figure}      

One more thing to note: we are interested in reaching the optimal geodesic. Therefore, sometimes we should cease our acceleration before reaching $r = 0$. From the equation \eqref{eq:e_of_r} we can obtain the moment of reaching the optimal geodesic
\begin{equation}
    r_{\text{opt}} = 2M + \frac{e_0}{\alpha},
\end{equation}
and use a substitution $u' \to u - \frac{1}{r_{\text{opt}}}$ on \eqref{eq:tau_r_sub2} with a different lower bound $\frac{1}{r_{\text{opt}}}$ to obtain an integral that we can subtract from integral \eqref{eq:tau_r_carlsonj}:
\begin{equation}\label{eq:tau_r_carlsonj_to_opt}
    \tau(r_{\text{opt}}) = \frac{2}{3} \frac{R_J\left(-x + \frac{1}{2M},-y + \frac{1}{2M},-z + \frac{1}{2M},\frac{1}{2M}\right) - R_J\left(-x + \frac{1}{r_{\text{opt}}},-y + \frac{1}{r_{\text{opt}}},-z + \frac{1}{r_{\text{opt}}}, \frac{1}{r_{\text{opt}}}\right)}{\sqrt{2M}},
\end{equation}
which can then be added to the proper geodesic time from $r_{\text{opt}}$ to $0$ to get the final proper time for movement where we cease firing the engine at the most optimal moment.

\begin{figure}[t]
    \begin{subfigure}{0.49\textwidth}
        \includegraphics[width=0.9\linewidth]{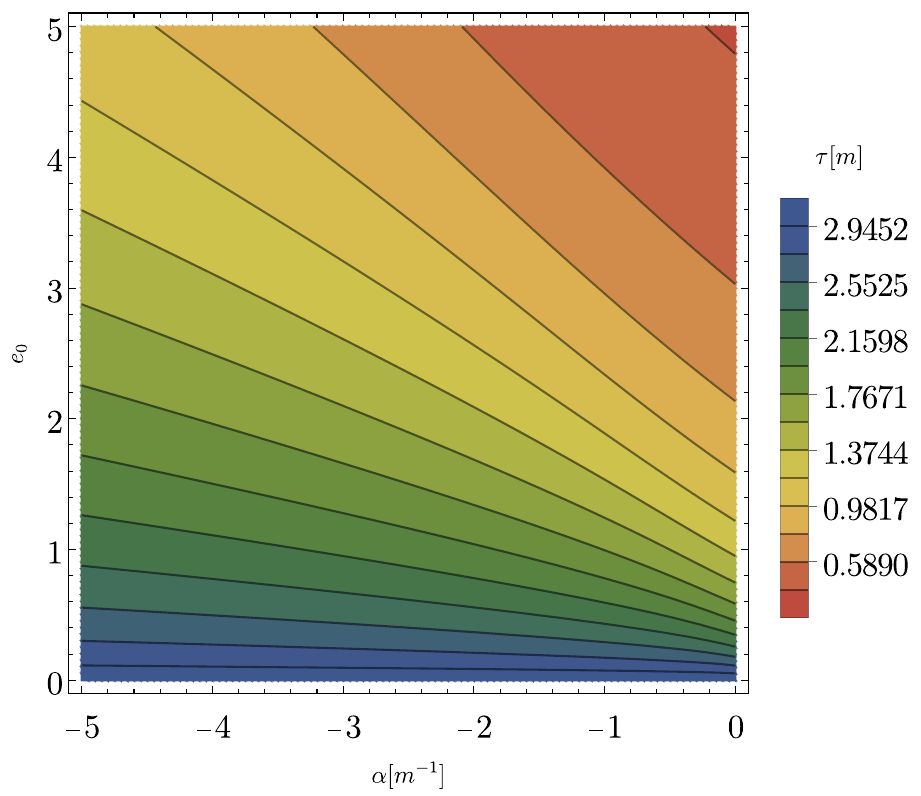} 
        \caption{Contour plot of proper time for radial infall depending on the parameters $\alpha$ and $e_0$. Colors retain meaning of proper time $\tau$ from worldline graphs. Contours are separated by $\frac{\pi}{16}$.}
        \label{fig:plots:e-contour}
    \end{subfigure}
    \begin{subfigure}{0.49\textwidth}
        \includegraphics[width=0.9\linewidth]{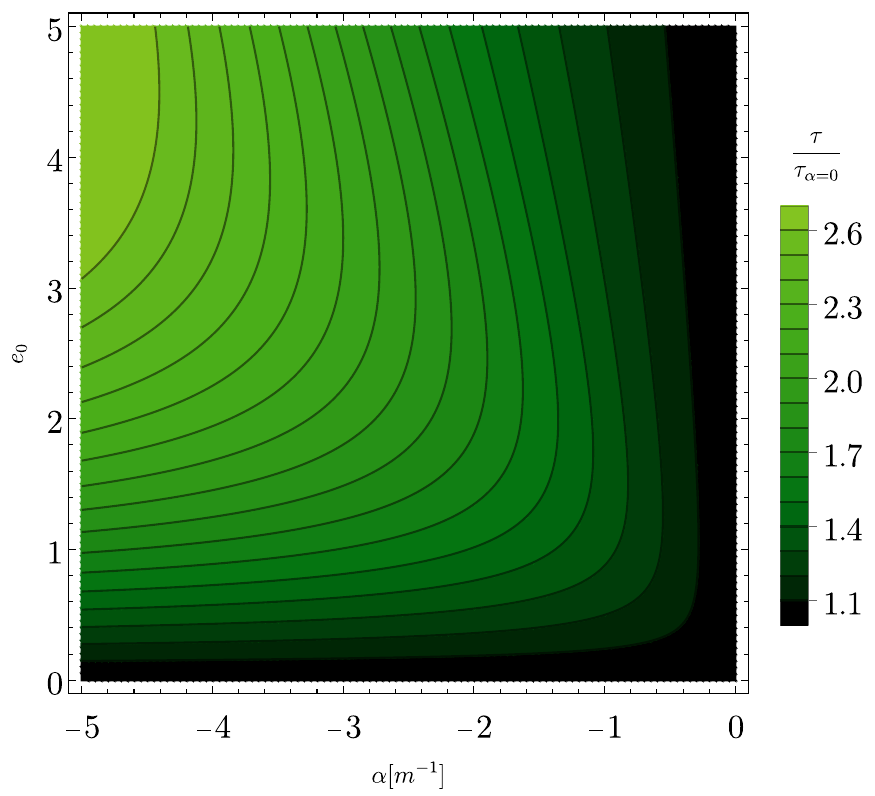} 
        \caption{Contour plot of the fraction $\frac{\tau}{\tau_{\alpha=0}}$ representing the relative gain thanks to using engines.}
        \label{fig:plots:e-contour-gain}
    \end{subfigure}
    \caption{Plots of proper time and time gained for accelerated motion in a black hole with $M = 1 m$.}
    \label{fig:plots:e}
\end{figure}

The equations contain the formulas for geodesic time, too -- simply set $\alpha = 0$. Therefore, the proper time of the motion can be easily calculated using a generic function that computes three Carlson elliptic integrals in a computer algebra program such as Mathematica. Figures \ref{fig:plots:e} show the proper time for a variety of parameters.

Also, from \eqref{eq:tau_radial} by differentiating both sides, and by using \eqref{eq:u_norm} we can obtain an analytic expression for the 4-velocity as a function of $r$:
\begin{equation}
    \begin{split}
        u^{r} & = - \sqrt{ \left( -\alpha r + 2M + e_0 \right)^2 + \left(\frac{2M}{r} - 1\right) } \\
        u^{t} & = \pm \left(\frac{2M}{r} - 1\right)^{-1} \sqrt{ \left(\frac{2M}{r} - 1\right) + (u^r)^2 }.
    \end{split}
\end{equation}

\subsection{Infall with angular momentum}
Another simplifying assumption we can have is to assume we are starting with energy $e_0 = 0$, but with non-zero angular momentum. While no observer falling in from the outside will ever start with these initial conditions on the event horizon, it is instructive to eliminate the impact of the displacement in $\partial_t$.

It must be noted that extreme values of angular momentum $L_0 > \sqrt{12}M$ combined with very low values of $e_0$ are not realistic for actual travellers, especially ones that fall into the black hole on a geodesic, since they would be repelled by the centrifugal potential barrier \cite{wald2010general}. However, in the case of a rocket equipped observer, it can happen that crossing the event horizon is unavoidable, but we can delay it with our engines. In that case, increasing angular momentum can be the correct way to delay hitting the event horizon, so initial parameters above $\sqrt{12}M$ are not as impractical as they appear. Regardless, it is a very useful case to focus on for analysis, which we will now show.

With $t = 0$, we have $e_0 = 0$ and $u^t = 0$. We don't need to apply any acceleration in the spatial $t$ coordinate, and $a^t = 0$. We have 4 variables for our three constraints and initial condition $L_0 = L(0)$.

\begin{equation}\label{eq:eom_axial}
    \begin{split}   
        \phi'' & = - \frac{2 r' \phi'}{r} + \frac{\alpha r'}{\sqrt{r^2 ( r'^2 + r( -2M + r) \phi'^2}} \\
        r'' & = \frac{M r'^2}{r (-2M + r)} - (2M - r) \phi'^2 + \frac{\alpha r (-2M + r) \phi'}{\sqrt{r^2 ( r'^2 + r( -2M + r) \phi'^2)}}
    \end{split}
\end{equation}
are the equations of motion in this configuration. Just as before, we have used a numerical integration of this system for verification of results. Initial conditions are calculated arbitrarily close to the horizon $r(0) = 2M$ at $\phi(0) = 0$, with $\phi'(0)$ calculated from \eqref{eq:angular_momentum_per_unit_mass} and $r'(0)$ from \eqref{eq:u_norm}.

\begin{figure}[b]
    \includegraphics[clip,trim=0 9cm 0 0,width=0.7\linewidth]{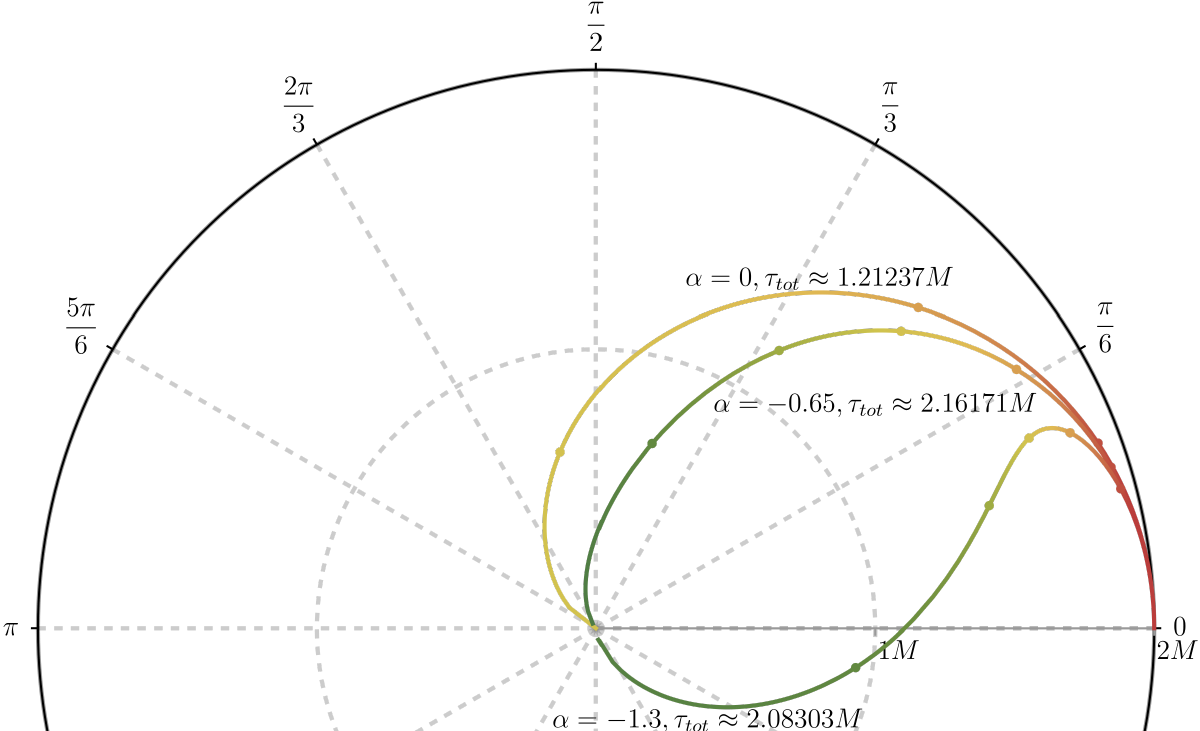}
    \caption{Curves of accelerated motion after infall with $L_0 = 3.5, e_0 = 0$ for various values of acceleration $\alpha$. Excessive accelerations result in overshooting the perfect geodesic and thus in lowering the time survived; however, some values of $a$ extend the observer's life.}
    \label{fig:curves:l}
\end{figure}      

In every instantaneous frame associated with the optimal observer the value $L(\tau)$ (no longer constant) retains its definition \eqref{eq:angular_momentum_per_unit_mass}. Taking the absolute derivative of the tensor form of \eqref{eq:angular_momentum_per_unit_mass} with respect to proper time gives
\begin{equation}
    \frac{DL}{d\tau} = \frac{D}{d\tau} (u_\mu (\partial_\phi)^\mu) = \frac{D u_\mu}{d\tau} (\partial_\phi)^\mu + \frac{D (\partial_\phi)^\mu}{d\tau},
\end{equation}
for which we will use the fact $L$ is a scalar and the constancy of $\partial_\phi$, along with the definition of $a^\phi$ from \eqref{eq:gen_eom} to get
\begin{equation}
    \frac{dL}{d\tau} = g_{\phi\phi} a^\phi.
\end{equation}
The left side can be expanded with 
\begin{equation}\label{eq:dLdr_lhs}
    \frac{dL}{d\tau} = \frac{dL}{dr} \frac{dr}{d\tau} = \frac{dL}{dr} u^r.
\end{equation}
For the right side, we use \eqref{eq:a_norm}, \eqref{eq:u_norm} and \eqref{eq:a_orthogonal} to express $a^\phi$ in terms of $\alpha$ and $u^r$ (choosing sign arbitrarily, since the sign of $L$ has no physical relevance the way the sign of $e$ had):
\begin{equation}\label{eq:dLdr_rhs}
    a^\phi = - \alpha r^{-1} \sqrt{1 + r^2 (u^\phi)^2} = - \alpha r^{-1} \sqrt{g_{tt}^{-1}} u^r.
\end{equation}
Equations \eqref{eq:dLdr_lhs} and \eqref{eq:dLdr_rhs} give 
\begin{equation}\label{eq:dLdr}
    \frac{dL}{dr} = - \frac{\alpha r}{\sqrt{\frac{2M}{r} - 1}}.
\end{equation}
This time, the metric components do not cancel neatly, and the behavior of $L$ under acceleration is radically different.

Equation \eqref{eq:dLdr_rhs} can be integrated with initial value $L(2M) = L_0$ to obtain the relation between acceleration applied in the $\partial_\phi$ spatial direction and the value of $L$ (now in terms of the $r$ coordinate)
\begin{equation}\label{eq:L_of_r}
    L(r) = L_0 + \alpha \left( \frac{1}{2}  r (3M + r) \sqrt{\frac{2M}{r} - 1}+ 3M^2 \arctan\sqrt{\frac{2M}{r} - 1}\right).
\end{equation}

We can insert this into the equation \eqref{eq:geodesic_proper_time} to obtain:
\begin{equation}\label{eq:tau_angular}
    \tau = \int_{0}^{2M} dr \left[ \left( 1 + \frac{\left[ L_0 + \alpha \left( \frac{1}{2} r (3M + r) \sqrt{\frac{2M}{r} - 1} + 3M^2 \arctan\sqrt{\frac{2M}{r} - 1}\right) \right]^2}{r^2}\right) \left(\frac{2M}{r} - 1\right) \right]^{-\frac{1}{2}},
\end{equation}

\begin{figure}[t]
    \begin{subfigure}{0.49\textwidth}
        \includegraphics[width=0.9\linewidth]{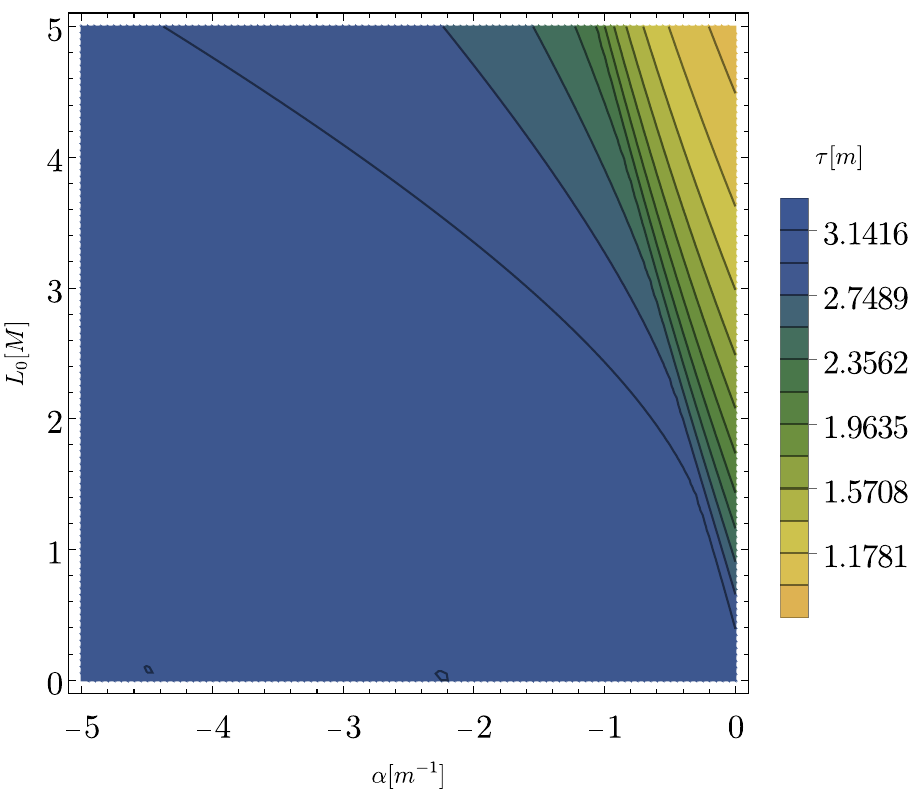} 
        \caption{Contour plot of proper time for longitudinal infall depending on the parameters $\alpha$ and $L_0$. Colors retain meaning of proper time $\tau$ from worldline graphs. Contours are separated by $\frac{\pi}{16}$.}
        \label{fig:plots:l-contour}
    \end{subfigure}
    \begin{subfigure}{0.49\textwidth}
        \includegraphics[width=0.9\linewidth]{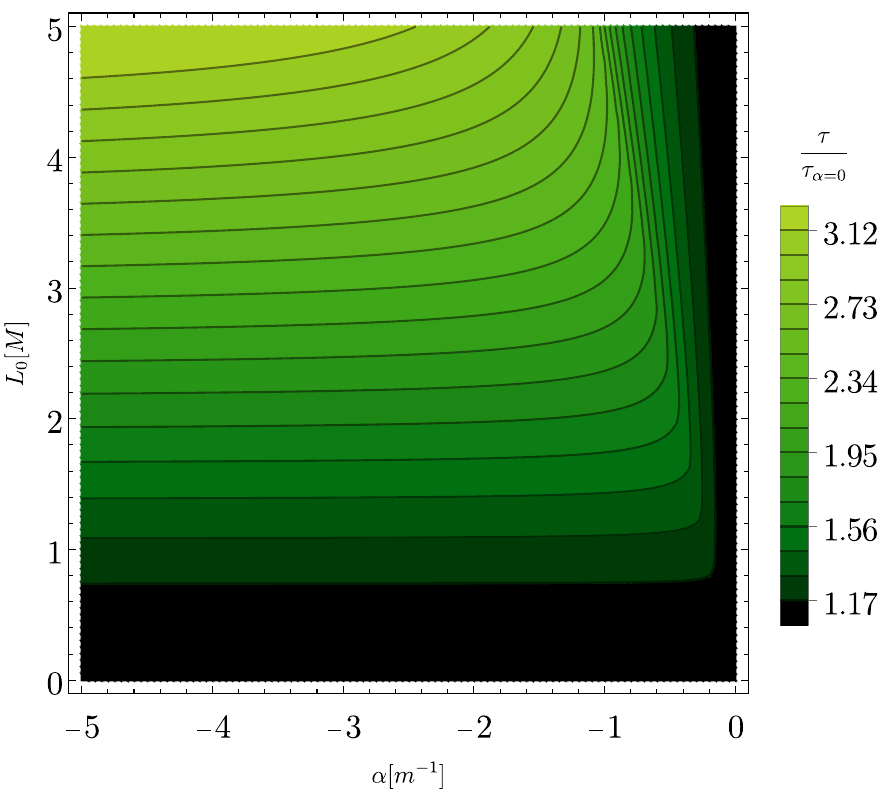} 
        \caption{Contour plot of the fraction $\frac{\tau}{\tau_{\alpha=0}}$ representing the relative gain thanks to using engines.}
        \label{fig:plots:l-contourgain}
    \end{subfigure}
    \caption{Plots of proper time and time gained for accelerated motion in a black hole with $M = 1 m$.}
    \label{fig:plots:l}
\end{figure}

This time, a sensible substitition is much more elusive. Figure \ref{fig:plots:l} showcases the effect on proper time of some typical parameters, including when acceleration is applied to increase the value of $L(r)$. Comparing with figure \ref{fig:plots:e}, we can see a much more dramatic shift in proper times at certain values of acceleration compared to the value of $L(r)$; this is because of how angular velocity changes depending on the radius.

We can also obtain an analytic expression for the 4-velocity as a function of $r$:
\begin{equation}
    \begin{split}
        u^{r} & = - \sqrt{ \left( 1 + \frac{\left[ L_0 + \alpha \left( \frac{1}{2} r (3M + r) \sqrt{\frac{2M}{r} - 1} + 3M^2 \arctan\sqrt{\frac{2M}{r} - 1}\right) \right]^2}{r^2}\right) \left(\frac{2M}{r} - 1 \right) } \\
        u^{\phi} & = \pm \frac{1}{r} \sqrt{ \left(\frac{2M}{r} - 1\right)(u^r)^2 - 1 } .
    \end{split}
\end{equation}

\begin{figure}[b]
    \includegraphics[clip,trim=0 10cm 0 0,width=0.7\linewidth]{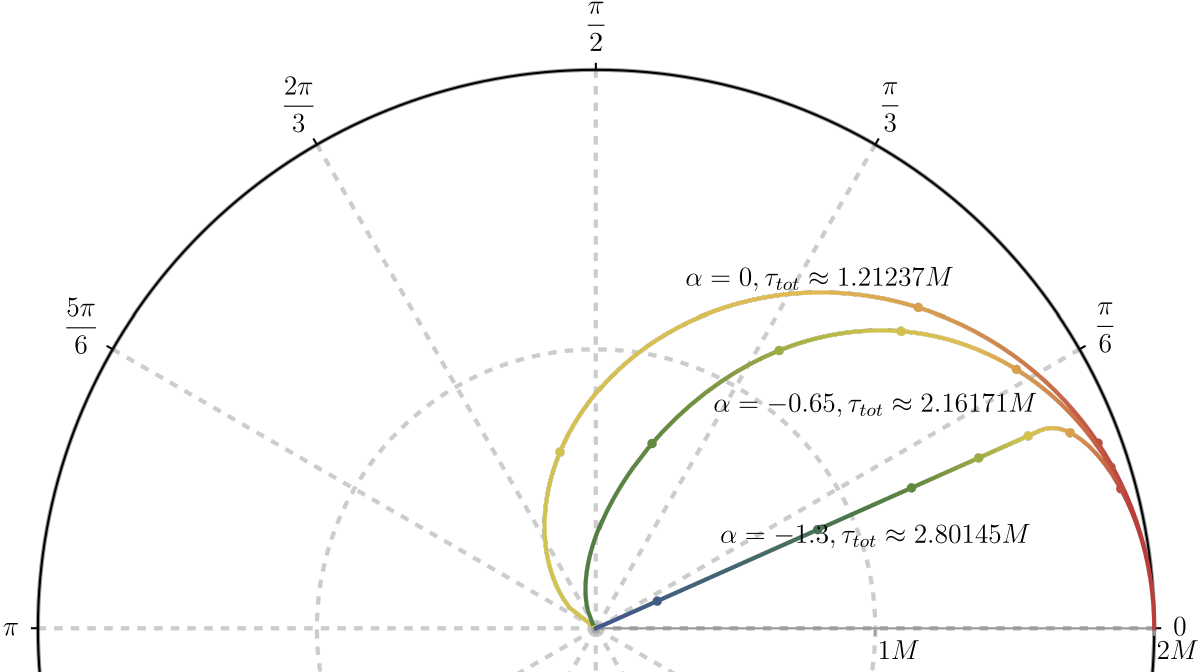}
    \caption{Curves of accelerated motion after infall with $L_0 = 3.5, e_0 = 0$ for various values of acceleration $a$. If we reach the optimal geodesic, we cease accelerating further, and this way, the more acceleration we have, the more we can live. Compare with \ref{fig:curves:l}.}
    \label{fig:curves:l_optimal}
\end{figure}      

\subsection{General case}

In the general case, there is a problem: we have an extra degree of freedom, corresponding to the angle at which we thrust with our engines. However, because of the curved geometry this angle is hard to define. Therefore, we will employ a geometric argument to simplify the problem, which will turn out to generalise to other spacetimes.

\begin{theorem}[Schwarzschild black hole survival time maximisation principle]\label{theorem:maximisation_principle}
    Assuming a rocket equipped astronaut falls towards the singularity with some 3-velocity $\vec{u}$ in relation to the optimal observer frame, to move on the worldline that maximises proper time, they must continuously thrust with all available engine power with a 3-acceleration $\vec{a}$ directed opposite to $\vec{u}$ until their 4-velocity matches \eqref{eq:u_opt}.
\end{theorem}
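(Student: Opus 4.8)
The plan is to work entirely in the frame field of optimal observers built in Section \ref{section:framefield} and to collapse the maximisation of $\tau$ into the pointwise minimisation of a single scalar. Writing the astronaut's 4-velocity as $u$ and the local optimal-observer 4-velocity \eqref{eq:u_opt} as $u_{\mathrm{opt}}$, I would introduce the relative Lorentz factor $\gamma=-g(u,u_{\mathrm{opt}})$, which is precisely the energy per unit mass measured in the optimal frame. Since the spatial legs of that frame lie along $\partial_t,\partial_\theta,\partial_\phi$, none of which has an $r$-component, the boost decomposition $u=\gamma(u_{\mathrm{opt}}+\vec v)$ gives $u^r=\gamma\,u_{\mathrm{opt}}^r=-\gamma\sqrt{2M/r-1}$. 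As $r$ is monotone along any infalling worldline, $\tau=\int_0^{2M}\mathrm{d}r\,/\,(\gamma\sqrt{2M/r-1})$, reproducing \eqref{eq:geodesic_proper_time} through $\gamma^2(2M/r-1)=e^2+(1+L^2/r^2)(2M/r-1)$. This integrand is strictly decreasing in $\gamma$ at every $r$, and $\gamma\ge 1$ with equality iff $u=u_{\mathrm{opt}}$; hence maximising $\tau$ is the same as making $\gamma(r)$ as small as possible at each radius, the absolute floor $\gamma\equiv 1$ being the optimal geodesic \eqref{eq:maximal_geodesic_time}.

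Next I would fix the instantaneous direction of thrust. Differentiating, $\tfrac{\mathrm{d}\gamma}{\mathrm{d}\tau}=-g(a,u_{\mathrm{opt}})-g\!\left(u,\tfrac{D u_{\mathrm{opt}}}{\mathrm{d}\tau}\right)$, where $a$ is the 4-acceleration \eqref{eq:gen_eom}. The second term is a purely geometric drift that does not contain $a$, so only the first term is under the pilot's control, and decreasing $\gamma$ fastest means minimising $-g(a,u_{\mathrm{opt}})$ subject to \eqref{eq:a_norm} and \eqref{eq:a_orthogonal}. This is the extremisation of a linear functional over a sphere in the spacelike hyperplane $u^{\perp}$, solved by $a\propto u_{\mathrm{opt}}-\gamma u$, the projection of $u_{\mathrm{opt}}$ orthogonal to $u$. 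In the optimal frame this vector is spatial and directed along $-\vec v$; equivalently, in the astronaut's rest frame the felt 3-acceleration points opposite to $\vec v$. The extremal value is $-g(a,u_{\mathrm{opt}})=-\alpha\sqrt{\gamma^2-1}=-\alpha\gamma|\vec v|$, and since the optimum carries no component perpendicular to $\vec v$, full-power antiparallel thrust both decelerates fastest and wastes no power turning $\vec v$. Once $u=u_{\mathrm{opt}}$ (i.e. $\gamma=1$) is reached, $\vec v=0$ and any further thrust can only push the 4-velocity off $u_{\mathrm{opt}}$ and raise $\gamma$ again, so the engine must be cut and the astronaut coasts on the optimal geodesic, which holds $\gamma=1$ for the remainder of the fall.

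To upgrade this myopic statement to the global claim I would run a comparison argument in the variable $r$: along every admissible worldline $\mathrm{d}\gamma/\mathrm{d}\tau\ge(\text{drift})-\alpha\gamma|\vec v|$, with equality exactly for antiparallel full thrust, and I would conclude that the proposed strategy $\mathcal{S}^{*}$ (full antiparallel thrust until $u=u_{\mathrm{opt}}$, then coast) satisfies $\gamma^{*}(r)\le\gamma(r)$ for all competitors sharing the same $e_0,L_0$, whence $\tau^{*}\ge\tau$. In the decoupled regimes this closes cleanly: the drift reduces to a function of $(\gamma,r)$ alone and the evolution is one-dimensional in $e$ or in $L$. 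Indeed \eqref{eq:e_of_r} and \eqref{eq:L_of_r} are exactly the $\mathcal{S}^{*}$ trajectories, and no control can lower $|e|$ faster than $|\mathrm{d}e/\mathrm{d}r|=\alpha$ or lower $|L|$ faster than \eqref{eq:dLdr}.

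The main obstacle is precisely this last, fully coupled, step. The drift term $-g(u,Du_{\mathrm{opt}}/\mathrm{d}\tau)$ is \emph{direction dependent}: it depends on how $\gamma^2-1$ is partitioned between the $\partial_t$ and $\partial_\phi$ tetrad directions, because the radial contribution $e^2/f$ to $\gamma^2$ decays during infall whereas the angular contribution $L^2/r^2$ grows without bound. Hence $\gamma^{*}(r)$ is not governed by an autonomous differential equation in $(\gamma,r)$, a naive one-dimensional comparison principle does not immediately apply, and in principle the drift could reward \emph{reshaping} $\vec v$ rather than merely shrinking it. I would therefore either (i) restrict the rigorous comparison to the monotone one-dimensional subproblems above and invoke the absorbing character of the floor $\gamma=1$ together with the fact that $\mathcal{S}^{*}$ reaches it at the largest attainable $r$, or (ii) phrase the endgame through Pontryagin's maximum principle, where the terminal transversality condition forces the thrust antiparallel to $\vec v$ at the terminal radius and the residual delicacy is to verify that the adjoint vector stays aligned with $\vec v$ throughout. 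Establishing that alignment is the genuine crux of a fully general proof.
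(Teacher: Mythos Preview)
Your approach is the paper's approach: introduce $\gamma=-g(u,u_{\mathrm{opt}})$, rewrite $\tau$ as an integral of $\gamma^{-1}$ over the optimal-observer congruence, and conclude that the thrust must be antiparallel to $\vec v$ because that minimises $\gamma$ instantaneously. The paper carries out exactly this program, parametrising by $\tau_{\mathrm{opt}}$ rather than $r$ (it remarks on the equivalence immediately after the proof) and invoking the flat-space identity $d\gamma/d\tau=\gamma^{4}\,\vec v\cdot\vec a$ in the local inertial frame of the optimal observer, from which the conclusion is read off directly.

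Where you go beyond the paper is precisely where you flag uncertainty. You correctly split $d\gamma/d\tau$ into the controllable piece $-g(a,u_{\mathrm{opt}})$ and the geometric drift $-g(u,Du_{\mathrm{opt}}/d\tau)$; the paper's special-relativistic formula silently drops the latter (and indeed $\gamma$ changes along a non-optimal \emph{geodesic}, where $a=0$, so the omission is real). You also correctly note that ``minimise $d\gamma/d\tau$ at each instant'' does not by itself imply ``minimise $\gamma(r)$ for all $r$'' once the drift depends on the direction of $\vec v$ and not merely on $|\vec v|$ --- your observation that $e^{2}/f$ relaxes while $L^{2}/r^{2}$ blows up as $r\to 0$ is exactly the mechanism by which a greedy control could in principle be beaten by one that preferentially sheds angular momentum. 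The paper's proof does not engage with either point: it passes from the pointwise statement to the global one by assertion. So you have faithfully reconstructed the paper's argument and, in addition, located a genuine gap that the paper leaves open; your proposed remedies (a monotone comparison in the decoupled subproblems plus the absorbing floor $\gamma=1$, or a Pontryagin argument tracking the costate alignment) are the natural routes to a fully rigorous statement.
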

\begin{proof}
    Let us pick a point on our accelerated worldline $x^\mu = (r, t, \frac{\pi}{2}, \phi)$. At that point, there exists an instantenous inertial frame associated with the geodesic tangent to the worldline. In addition, there is a tangent 4-velocity $u^\mu$ and a perpendicular 4-acceleration $a^\mu$, which take the form $(1, 0, 0, 0)$ and $(0, a_t, 0, a_\phi)$ in that frame respectively.

    However, at the same point, there exists another valid inertial frame: the frame field spanned by observers with 4-velocity $u_{\textrm{opt}}^\mu = (- \sqrt{\frac{2M}{r} - 1}, 0, 0, 0)$. In this frame, the time component of the 4-velocity of the accelerated worldline turns out to be
    \begin{equation}
        u^r(\tau) = - \sqrt{\left(\frac{L(\tau)^2}{r^2} + 1\right) \left(\frac{2M}{r} - 1\right) + e(\tau)^2}.
    \end{equation}

    Since $r$ must necessarily decrease across any future-directed worldline, we know that the proper time of the optimal observer $\tau_{\textrm{opt}}$ can be described as a function of $r$. On an infinitesimal interval $(\tau_{\textrm{opt}}(r) - \epsilon, \tau_{\textrm{opt}} + \epsilon)$, we can treat the metric as Minkowskian in both these frames. Therefore, there exists a boost between the optimal frame and the worldline's frame. This is a Lorentz boost with a Lorentz factor
    \begin{equation}\label{eq:proof_lorentz_factor}
        \gamma = g_{\mu\nu} u^\mu u_{\textrm{opt}}^\nu = \sqrt{\frac{L^2}{r^2} + 1 + e^2 \left(\frac{2M}{r} - 1\right)^{-1}},
    \end{equation}
    and the ratio of proper time for the optimal worldline to the proper time for the accelerated worldline is 
    \begin{equation}
        \frac{d\tau_{\textrm{opt}}}{d\tau} = \gamma.
    \end{equation}

    We can perform this reasoning at enough points to fill the entire interval $\tau_{\textrm{opt}} \in (0, M \pi)$ (or equivalently, $r \in (0, 2M)$) and get 
    \begin{equation}\label{eq:tau_gamma}
        \tau = \int_{0}^{M \pi} \gamma^{-1} d\tau_{\textrm{opt}},
    \end{equation}
    which is clearly maximalised when $\gamma$ is minimised. Since the boost between the two frames was Lorentzian, this means that the Lorentz factor also obeys the definition
    \begin{equation}
        \gamma = \left(1 - |v|^2\right)^{-1/2}.
    \end{equation}

    Differentiating $\gamma$ in $\tau$ in the inertial frame gives
    \begin{equation}
        \frac{d\gamma}{d\tau} = \gamma^4 \vec{v} \cdot \vec{a},
    \end{equation}
    which becomes
    \begin{equation}
        \frac{d\gamma}{d\tau_{\textrm{opt}}} \frac{d\tau_{\textrm{opt}}}{d\tau} = \gamma^4 \vec{v} \cdot \vec{a},
    \end{equation}
    and finally 
    \begin{equation}\label{eq:dgamma_dr}
        \frac{d\gamma}{d\tau_{\textrm{opt}}} = \gamma^3 \vec{v} \cdot \vec{a}.
    \end{equation}

    To minimise $\gamma$ in \eqref{eq:tau_gamma} means the value of \eqref{eq:dgamma_dr} must be negative and have the highest absolute value possible, which happens when the spatial scalar product $\vec{v} \cdot \vec{a} = |\vec{v}| |\vec{a}| \cos{\pi}$, ie.\ when the 3-acceleration has a maximal magnitude and is directed exactly opposite to the 3-velocity in the inertial frame of the optimal observer.
\end{proof}

Another observation that can be made is that the algebraic expression \eqref{eq:proof_lorentz_factor} actually shows up in \eqref{eq:geodesic_proper_time}, and we can obtain
\begin{equation}
    \tau = \int_{0}^{2M} \gamma^{-1} \left( \frac{2M}{r} - 1 \right)^{-1/2} dr,
\end{equation}
from which an almost identical argument follows. However, we used the reasoning presented here because it neatly generalises to more general worldlines in other spacetimes. We shall comment on this in section \ref{sec:future:otherspacetimes}.

As far as calculation goes, one can perform a quadrature in which at every step we adjust the values of the parameters $L$ and $e$ in the optimal observer frame with \eqref{eq:dedr} and \eqref{eq:dLdr}, and then use the definition of the integral \eqref{eq:geodesic_proper_time} to sum proper times. However, we have to make sure we decompose the proper 3-acceleration correctly into the two components along the $\partial_t$ and $\partial_\phi$ axes -- as it was the proper acceleration that entered the equations \eqref{eq:dedr} and \eqref{eq:dLdr}.

First, we take the 3-acceleration with magnitude $\alpha$ from the momentary frame to the optimal observer frame. This boost will be parallel to the direction of the acceleration by theorem \ref{theorem:maximisation_principle}, so in the optimal observer's frame
\begin{equation}
    |a_{\textrm{opt}}| = \frac{\alpha}{\gamma^3},
\end{equation}
as is normal for a transformation of 3-acceleration with a Lorentz boost parallel to the vector's direction.

The 3-vector $\vec{a_{\textrm{opt}}}$ is then, by theorem \ref{theorem:maximisation_principle}
\begin{equation}\label{eq:avec_opt}
    \vec{a_{\textrm{opt}}} = - \frac{\alpha}{\gamma^3} \frac{\vec{u_{\textrm{opt}}}}{|u_{\textrm{opt}}|},
\end{equation}
where the velocity vector has been normalised and then the 3-acceleration has been aligned parallel but opposite to it.

3-acceleration transforms from the optimal observer frame to the instantaneous frame aligned with the accelerating observer as
\begin{equation}
    \vec{a} = \gamma^2 \left[ \vec{a_{\textrm{opt}}} + \frac{(\vec{a_{\textrm{opt}}} \cdot \vec{u_{\textrm{opt}}}) \vec{u_{\textrm{opt}}}}{|\vec{u_{\textrm{opt}}}|^2} (\gamma - 1)\right].
\end{equation}
Inserting \eqref{eq:avec_opt} into this equation we get
\begin{equation}
    \vec{a} = - \gamma^2 \frac{\alpha}{\gamma^3} \frac{\vec{u_{\textrm{opt}}}}{|u_{\textrm{opt}}|} \left[ 1 + \frac{(\vec{u_{\textrm{opt}}} \cdot \vec{u_{\textrm{opt}}}) }{|\vec{u_{\textrm{opt}}}|^2} (\gamma - 1)\right] = - \frac{\alpha}{\gamma} \frac{\vec{u_{\textrm{opt}}}}{|u_{\textrm{opt}}|} \left[ 1 + \gamma - 1\right] = - \alpha \frac{\vec{u_{\textrm{opt}}}}{|u_{\textrm{opt}}|}.
\end{equation}

Because of the alignment of the 3-velocity and 3-acceleration, everything cancels out, and we can simply obtain the components of proper 3-acceleration by getting the angle of $\vec{u_{\textrm{opt}}}$ with the $\partial_t$ and $\partial_\phi$ axes and multiply its' sine or cosine by the total proper acceleration -- exactly as if we were decomposing the acceleration in flat space.

\begin{figure}[p]
    \begin{subfigure}{0.49\textwidth}
        \includegraphics[width=0.9\linewidth]{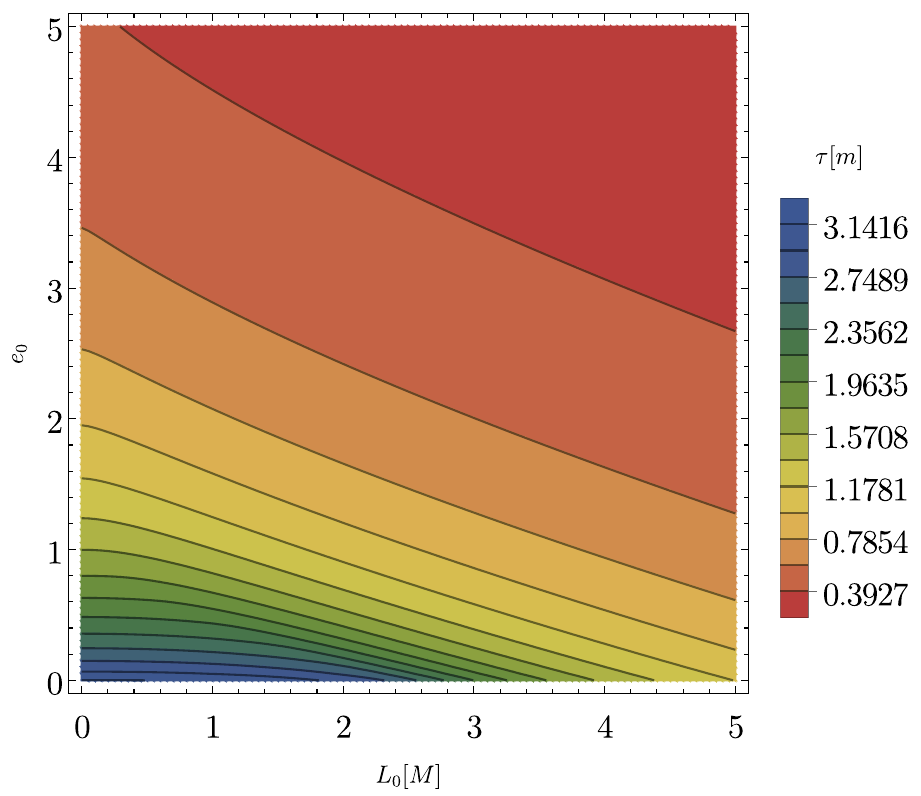} 
        \caption{$\alpha = -0.5 m^{-1}$, proper time $\tau$.}
    \end{subfigure}
    \begin{subfigure}{0.49\textwidth}
        \includegraphics[width=0.9\linewidth]{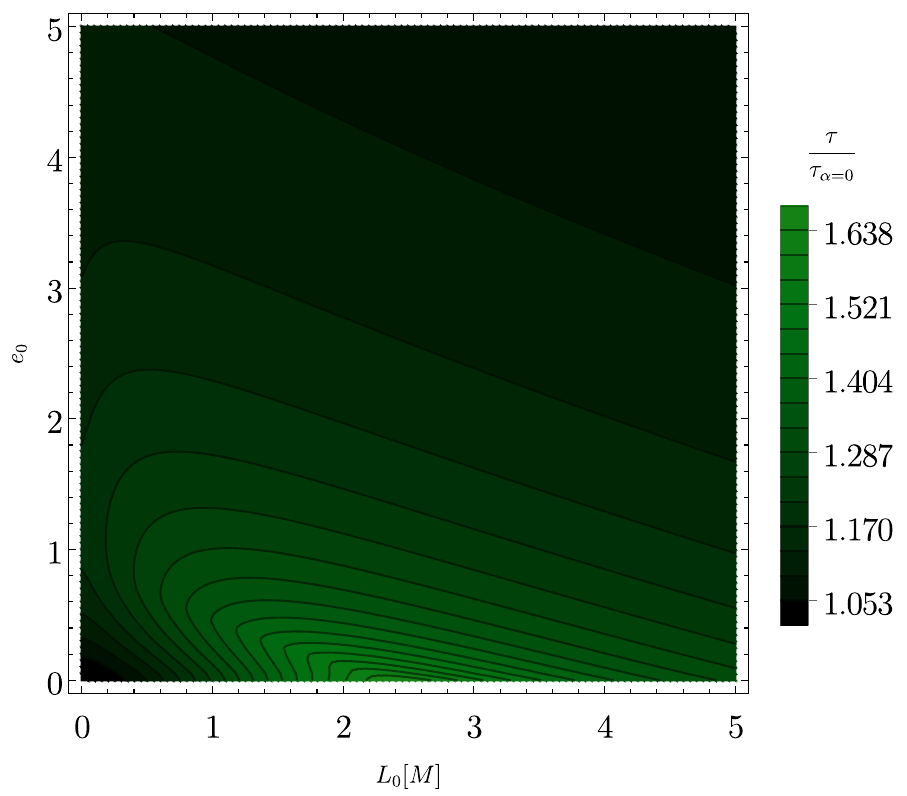} 
        \caption{$\alpha = -0.5 m^{-1}$, fraction of time gained $\frac{\tau}{\tau_{\alpha=0}}$.}
    \end{subfigure}

    \begin{subfigure}{0.49\textwidth}
        \includegraphics[width=0.9\linewidth]{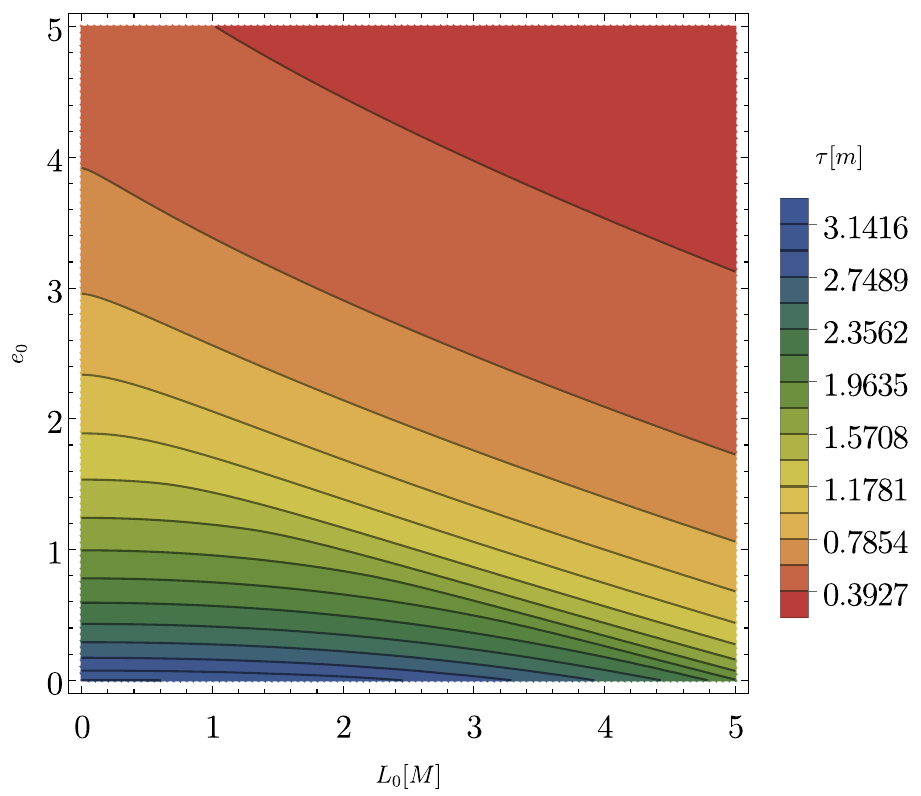} 
        \caption{$\alpha = -1.0 m^{-1}$, proper time $\tau$.}
    \end{subfigure}
    \begin{subfigure}{0.49\textwidth}
        \includegraphics[width=0.9\linewidth]{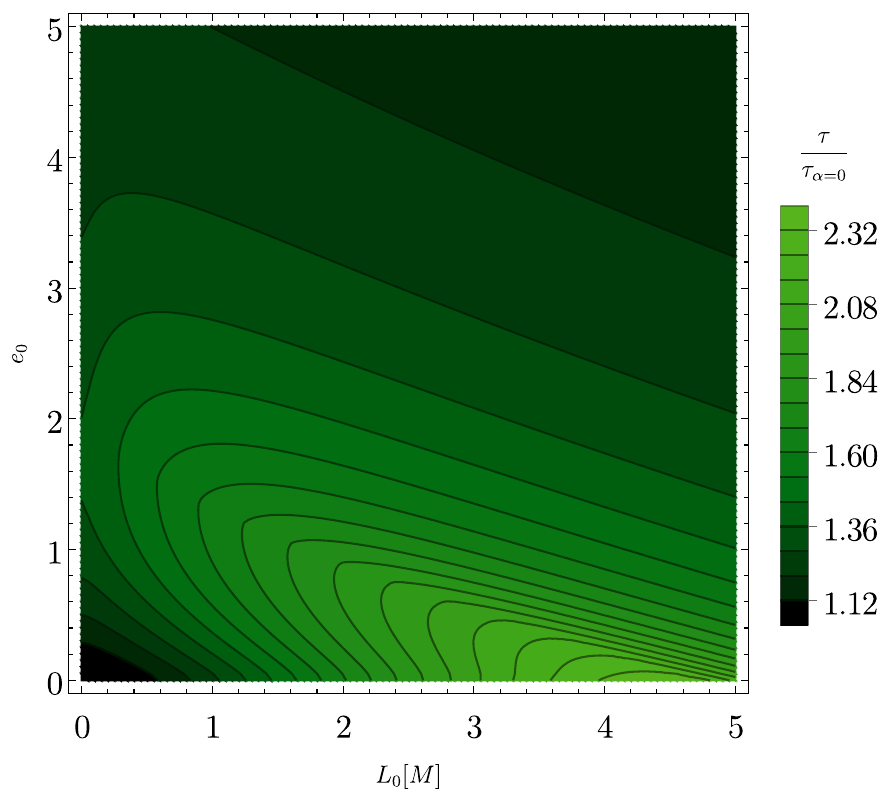} 
        \caption{$\alpha = -1.0 m^{-1}$, fraction of time gained $\frac{\tau}{\tau_{\alpha=0}}$.}
    \end{subfigure}
    
    \begin{subfigure}{0.49\textwidth}
        \includegraphics[width=0.9\linewidth]{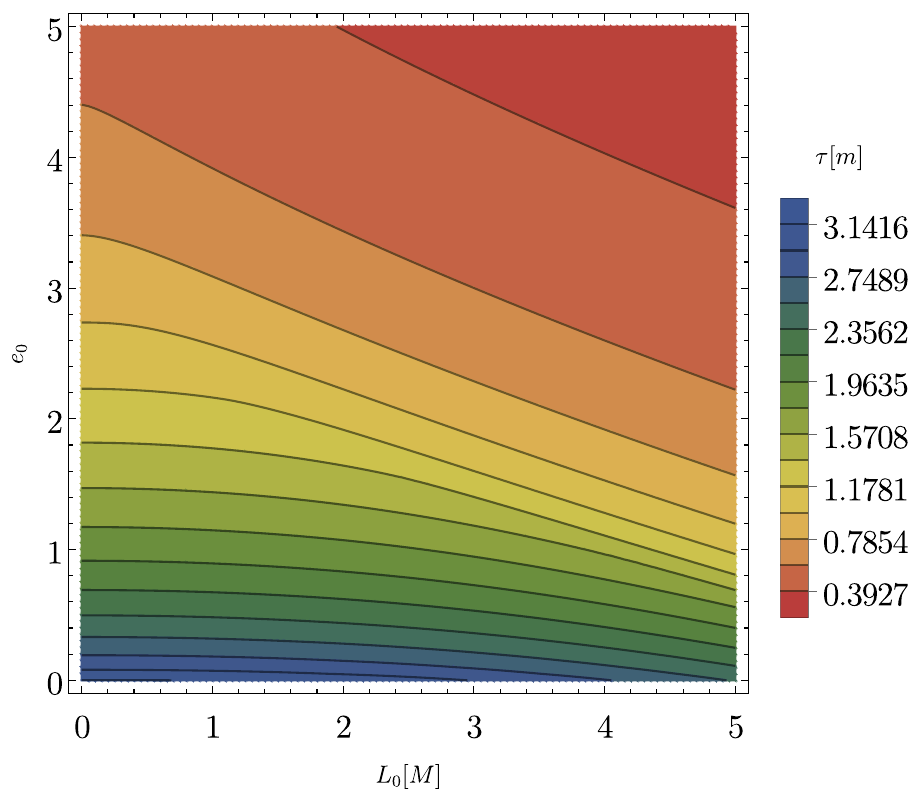} 
        \caption{$\alpha = -1.5 m^{-1}$, proper time $\tau$.}
    \end{subfigure}
    \begin{subfigure}{0.49\textwidth}
        \includegraphics[width=0.9\linewidth]{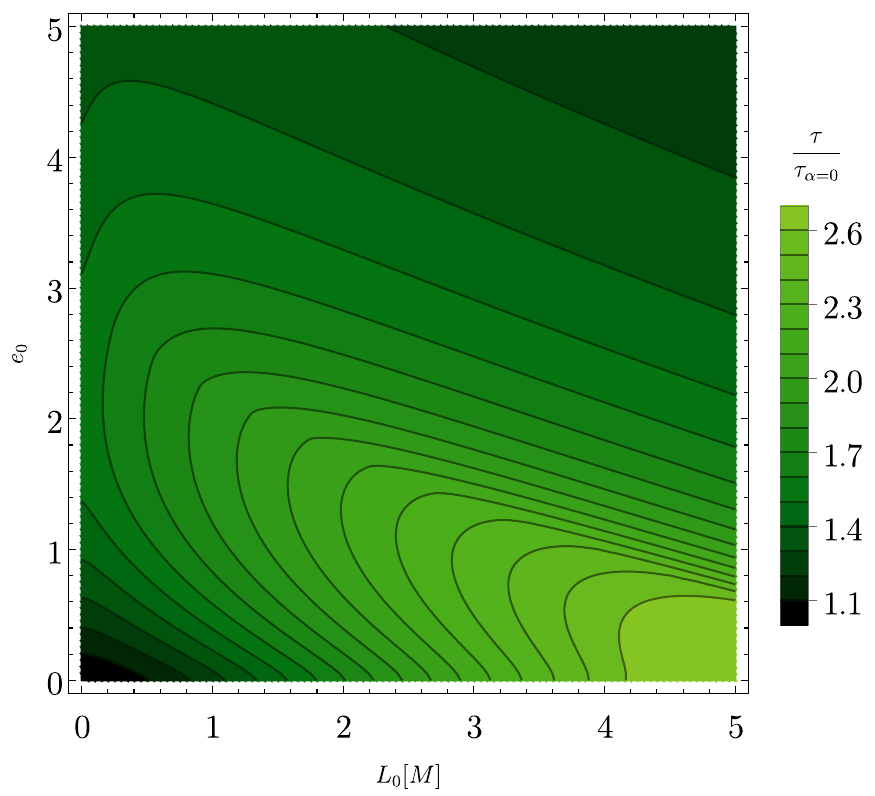} 
        \caption{$\alpha = -1.5 m^{-1}$, fraction of time gained $\frac{\tau}{\tau_{\alpha=0}}$.}
    \end{subfigure}
    \caption{Contour plots of proper time and time gained for infall depending on the parameters $e_0$ and $L_0$ for various values of $\alpha$, for a $M = 1m$ black hole. On the left are plots of proper time; colors retain meaning of proper time $\tau$ from worldline graphs; contours are separated by $\frac{\pi}{16}$. On the right are plots of time gained.}
    \label{fig:plots:general}
\end{figure}

\begin{figure}[p]
    \begin{subfigure}{0.49\textwidth}
        \includegraphics[width=0.9\linewidth]{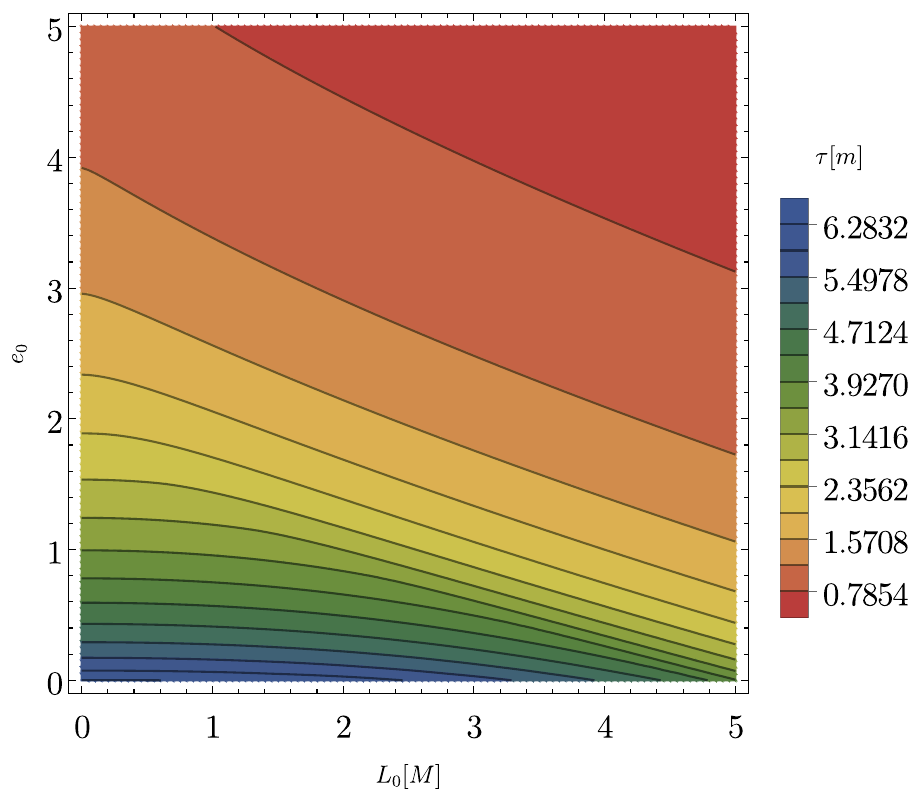} 
        \caption{$\alpha = -0.5 m^{-1}$, proper time $\tau$.}
    \end{subfigure}
    \begin{subfigure}{0.49\textwidth}
        \includegraphics[width=0.9\linewidth]{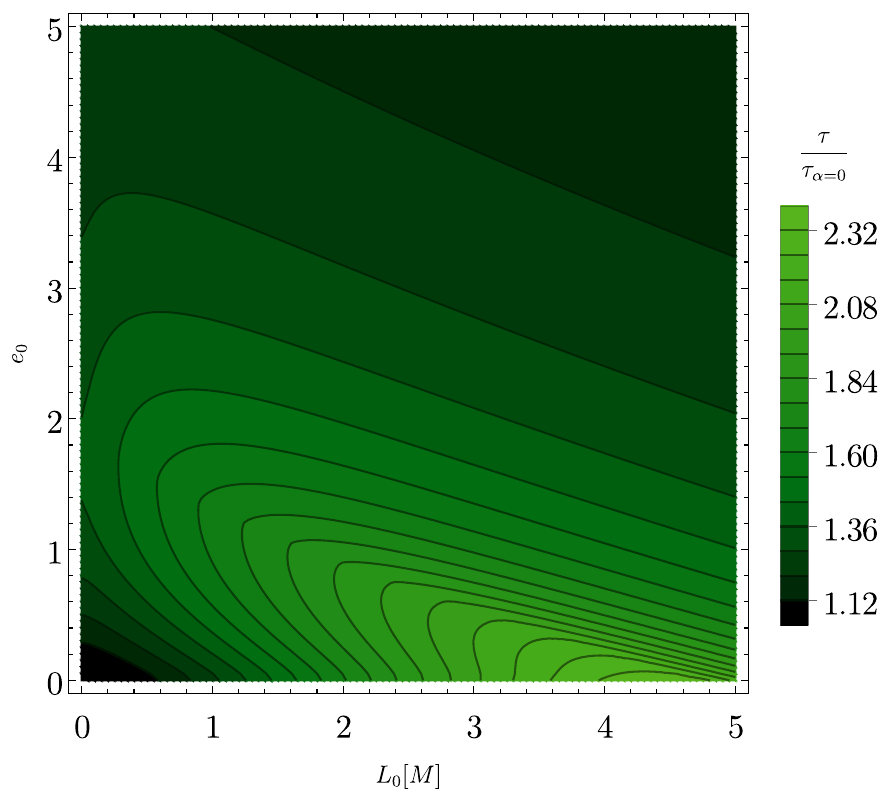} 
        \caption{$\alpha = -0.5 m^{-1}$, fraction of time gained $\frac{\tau}{\tau_{\alpha=0}}$.}
    \end{subfigure}

    \begin{subfigure}{0.49\textwidth}
        \includegraphics[width=0.9\linewidth]{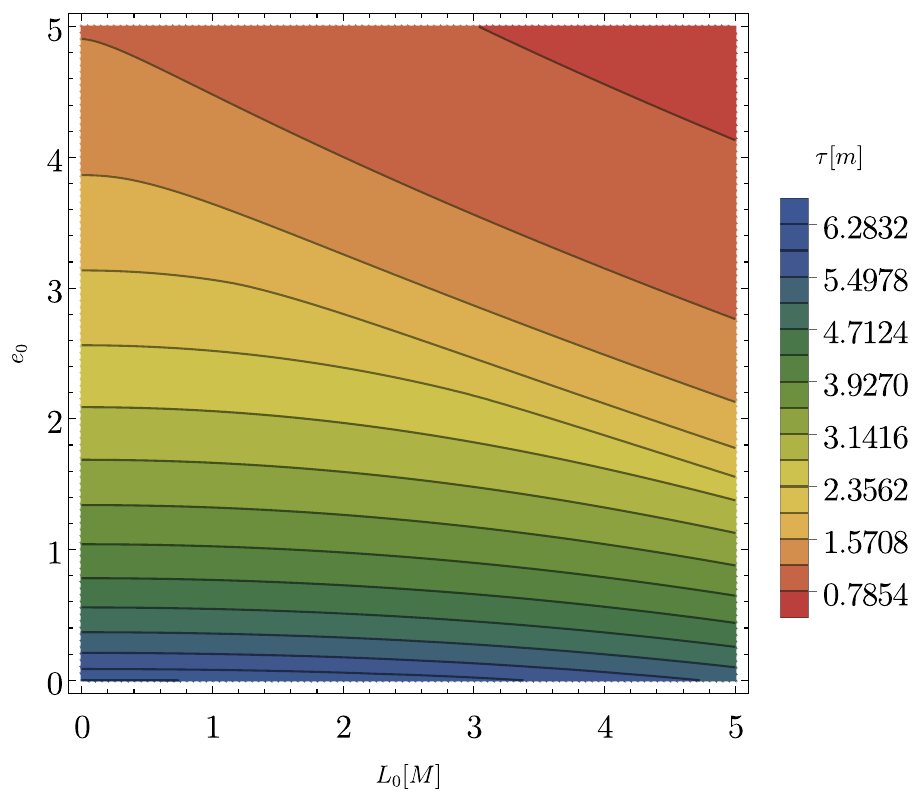} 
        \caption{$\alpha = -1.0 m^{-1}$, proper time $\tau$.}
    \end{subfigure}
    \begin{subfigure}{0.49\textwidth}
        \includegraphics[width=0.9\linewidth]{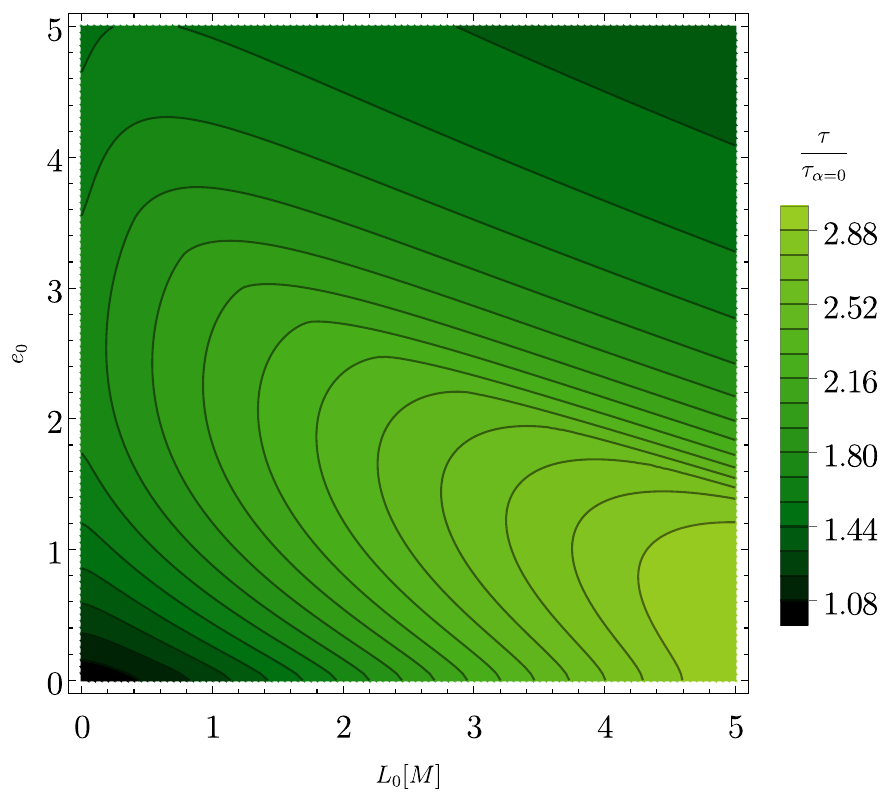} 
        \caption{$\alpha = -1.0 m^{-1}$, fraction of time gained $\frac{\tau}{\tau_{\alpha=0}}$.}
    \end{subfigure}
    
    \begin{subfigure}{0.49\textwidth}
        \includegraphics[width=0.9\linewidth]{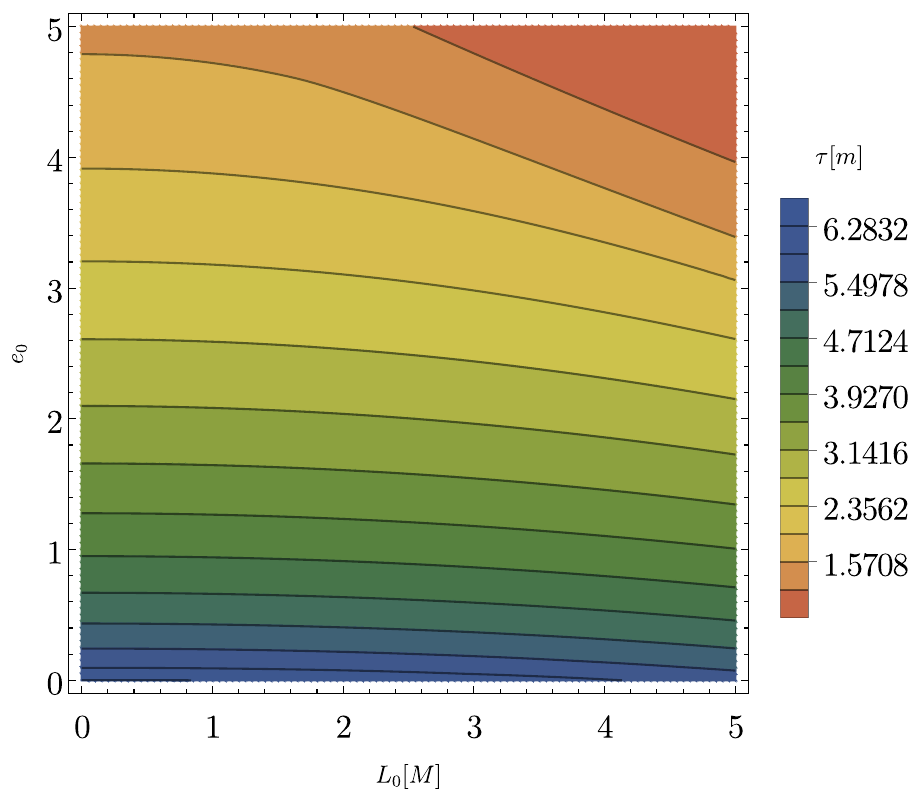} 
        \caption{$\alpha = -1.5 m^{-1}$, proper time $\tau$.}
    \end{subfigure}
    \begin{subfigure}{0.49\textwidth}
        \includegraphics[width=0.9\linewidth]{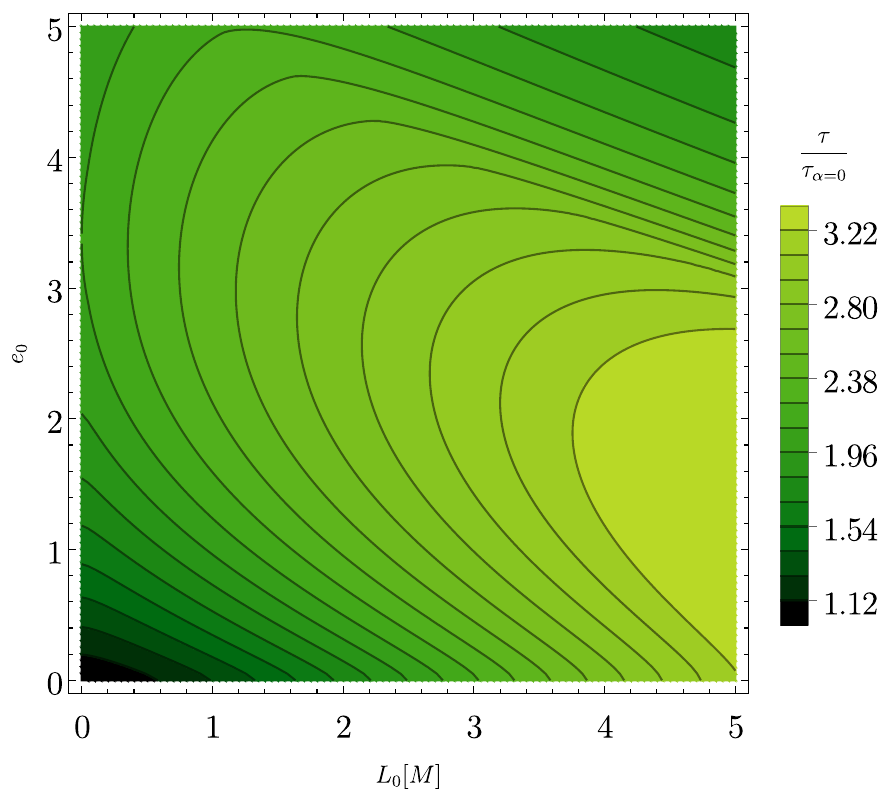} 
        \caption{$\alpha = -1.5 m^{-1}$, fraction of time gained $\frac{\tau}{\tau_{\alpha=0}}$.}
    \end{subfigure}
    \caption{Contour plots of proper time and time gained for infall depending on the parameters $e_0$ and $L_0$ for various values of $\alpha$, for a $M = 2m$ black hole. On the left are plots of proper time; colors retain meaning of proper time $\tau$ from worldline graphs; contours are separated by $\frac{\pi}{16}$. On the right are plots of time gained.}
    \label{fig:plots:general_2m}
\end{figure}

Figure \ref{fig:plots:general} showcases the proper time we can survive for a specific value of acceleration and black hole mass $1m$. Figure \ref{fig:plots:general_2m} showcases the proper time we can survive for a specific value of acceleration and black hole mass $2m$. The comparison of these figures, especially the case of $\alpha = 1 m^{-1}$ for the $1m$ black hole with $\alpha = 0.5 m^{-1}$ for the $2m$ black hole -- results in the same plot of the gain function. In other words, the behavior is dependent on the dimensionless product $M \alpha$, as indeed it must be. This has huge practical implications: the larger our black hole, the less engine power we need to effectively slow down.
\section{\label{sec:future}Results \& discussion}
\subsection{Gains for a `realistic' space traveller}
What would these results mean for astrophysical situations? We first need to determine what are the typical parameters we ought to consider.

For infall parameters $e_0$ and $L_0$, there are two `typical' scenarios. First one is a direct collision with the black hole with very little angular momentum; in this situation, we can use the analytical radial infall expression \eqref{eq:tau_r_carlsonj_to_opt} with $L_0 = 0$ and $e_0$ being either around $1$ (representing a free fall from far away) or higher (representing an unlucky colliding during some ultrarelativistic speed maneuver). The second viable scenario is some sort of accident when around the ISCO orbit, perhaps while performing research in a stable orbit around a black hole. In this scenario, $e_0 \approx 0.8$ and $L_0 = \sqrt{12} M$, as that is the angular momentum at which a potential barrier that makes infall unlikely shows up \cite{wald2010general}.

\begingroup
\squeezetable
\begin{table}[t]
\caption{Proper time before hitting the singularity depending on starting parameters}
\label{tab:tau_parameters}
\begin{tabular}{|l|l|l|l|l|}
    \hline
    mass M of black hole & starting parameters & $\alpha$  & $\tau$ & fraction of time gained $\frac{\tau}{\tau_{\alpha=0}} - 1$ \\ \hline
\multirow{4}{*}{$M = 4.297 * 10^6 M_\odot$ (SagA*)}    & \multirow{2}{*}{$e_0 = 1, L_0 = 0$}            & $0$     & $28.6467 s$ & N/A     \\ \cline{3-5} 
                                                       &                                                & $1000g$ & $28.6537 s$ & 0.00025 \\ \cline{2-5} 
                                                       & \multirow{2}{*}{$e_0 = 0.8, L_0 = \sqrt{12}M$} & $0$     & $16.156 s$  & N/A     \\ \cline{3-5} 
                                                       &                                                & $1000g$ & $16.160 s$  & 0.00025 \\ \cline{1-5} 
\multirow{10}{*}{$M = 6.5 * 10^9 M_\odot$ (M87*)}       & \multirow{5}{*}{$e_0 = 1, L_0 = 0$}           & $0$     & $12.037 h$  & N/A     \\ \cline{3-5} 
                                                       &                                                & $1g$    & $12.042 h$  & 0.00037 \\ \cline{3-5} 
                                                       &                                                & $10g$   & $12.082 h$  & 0.00372 \\ \cline{3-5} 
                                                       &                                                & $100g$  & $12.491 h$  & 0.0377 \\ \cline{3-5} 
                                                       &                                                & $1000g$ & $16.139 h$  & 0.341 \\ \cline{2-5} 
                                                       & \multirow{5}{*}{$e_0 = 0.8, L_0 = \sqrt{12}M$} & $0$     & $ 6.788 h$  & N/A     \\ \cline{3-5} 
                                                       &                                                & $1g$    & $ 6.790 h$  & 0.00039 \\ \cline{3-5} 
                                                       &                                                & $10g$   & $ 6.815 h$  & 0.00389 \\ \cline{3-5} 
                                                       &                                                & $100g$  & $ 7.065 h$  & 0.0407  \\ \cline{3-5} 
                                                       &                                                & $1000g$ & $12.537 h$  & 0.847 \\ \cline{1-5} 
\multirow{10}{*}{$M = 4.07 * 10^{10} M_\odot$ (TON618)}& \multirow{5}{*}{$e_0 = 1, L_0 = 0$}            & $0$     & $3.1404 d$  & N/A     \\ \cline{3-5} 
                                                       &                                                & $1g$    & $3.1477 d$  & 0.00233 \\ \cline{3-5} 
                                                       &                                                & $10g$   & $3.2142 d$  & 0.0235 \\ \cline{3-5} 
                                                       &                                                & $100g$  & $3.8778 d$  & 0.235 \\ \cline{3-5} 
                                                       &                                                & $1000g$ & $5.6014 d$  & 0.784 \\ \cline{2-5} 
                                                       & \multirow{5}{*}{$e_0 = 0.8, L_0 = \sqrt{12}M$} & $0$     & $1.771 d$  & N/A     \\ \cline{3-5} 
                                                       &                                                & $1g$    & $1.775 d$  & 0.00243 \\ \cline{3-5} 
                                                       &                                                & $10g$   & $1.815 d$  & 0.0251 \\ \cline{3-5} 
                                                       &                                                & $100g$  & $2.411 d$  & 0.361  \\ \cline{3-5} 
                                                       &                                                & $1000g$ & $5.773 d$  & 2.260 \\ \cline{1-5} 
\multirow{10}{*}{$M = 10^{11} M_\odot$ (PhoenixA)}        & \multirow{5}{*}{$e_0 = 1, L_0 = 0$}            & $0$     & $7.7160 d$  & N/A     \\ \cline{3-5} 
                                                       &                                                & $1g$    & $7.7603 d$  & 0.00573 \\ \cline{3-5} 
                                                       &                                                & $10g$   & $8.1671 d$  & 0.0585 \\ \cline{3-5} 
                                                       &                                                & $100g$  & $11.173 d$  & 0.448 \\ \cline{3-5} 
                                                       &                                                & $1000g$ & $15.065 d$  & 0.952 \\ \cline{2-5} 
                                                       & \multirow{5}{*}{$e_0 = 0.8, L_0 = \sqrt{12}M$} & $0$     & $4.352 d$  & N/A     \\ \cline{3-5} 
                                                       &                                                & $1g$    & $4.378 d$  & 0.00601 \\ \cline{3-5} 
                                                       &                                                & $10g$   & $4.633 d$  & 0.0646 \\ \cline{3-5} 
                                                       &                                                & $100g$  & $11.008 d$  & 1.530  \\ \cline{3-5} 
                                                       &                                                & $1000g$ & $15.407 d$  & 2.541 \\ \hline
\end{tabular}
\end{table}
\endgroup

For the mass of the black hole $M$, we can take a few representative examples: the Sagittarius A* black hole $M = 4.297 * 10^6 M_\odot$ \cite{2023sagittariusa}, Messier 87* $M = 6.5 * 10^9 M_\odot$ \cite{2019messier87}, TON 618 $M = 4.07 * 10^{10} M_\odot$ \cite{2019ton618}, and Phoenix A $M = 10^{11} M_\odot$ \cite{2016phoenixa}. Naturally, real astrophysical black holes spin, and we are assuming a spherical symmetry in this work. Nevertheless, we can obtain some instructive values.

That leaves the final parameter, the acceleration. In principle, one could hypothesize extreme acceleration a far-future engine could be capable of. However, there is a physiological limit of acceleration a human can handle. A sustained acceleration of $1g \approx 10 \frac{m}{s^2}$ would mimic gravity on the surface of the Earth and thus be very comfortable for passengers. Trained pilots can pull around $5g$ of force -- this would be a very unpleasant trip. Around that limit, the pressure differential caused by the acceleration in the cardiovascular system precludes long term survival in air; however, when submerged in a rigid tank filled with water, a human can survive up to $24g$ in relative comfort \cite{rossini2007beyond}. At that point, the air-filled chest cavity (known colloquially as `lungs') threatens collapse. It is still theoretically possible to accelerate faster, however, as one could also use some sort of breathable liquid \cite{gabriel1996quantitative}. Perfluorocarbons that can do this exist, however, one would ideally need a fluid of density very close to water, to match the surrounding tissues. While non-existent as of now, it is a realistic technological advance. With it, accelerations of $100g$ and even more are theorised to be survivable \cite{rossini2007beyond}.

We will settle on $a = 1g$, $a = 10g$, $a = 100g$ and $a = 1000g$ as our `realistic' parameters, with the understanding that the highest value is probably the absolute ceiling of what a human could coinceivably survive and the second highest requires large technological advances to be viable.

A table of results for each black hole is presented in \ref{tab:tau_parameters}.

We notice immediately a few interesting things. First off, when mass of the black hole is very low, accelerations on the order of $g$ have almost no effect on our survival time. However, for extremely large supermassive black holes, the effects are much higher: struggling is already a good idea when we're faced with the prospect of falling into Messier 87*. For even larger black holes, $10g$ -- completely achievable with current technology -- can give us an extra 6 percent more time, and a hypothetical $1000g$ can double or triple our survival time. Of course, we are always limited by the $\pi M$ upper bound. Nevertheless, our gains can be significant. To quote Dylan Thomas \cite{thomas2003poems}: ,,\textit{Do not go gentle into that good night}.''

\section{\label{sec:future:otherspacetimes}Generalised maximal survival principle}

The argument presented in theorem \ref{theorem:maximisation_principle} generalises in a straightforward manner. We will start with a geometrical construction of a particular vector field.

\begin{lemma}[Sufficient conditions for the existence of an optimal observer frame field between two hypersurfaces]\label{theorem:optimal_observer_field_existence}
    Let $M$ be a Lorentzian manifold of signature $(1, n-1)$. On this manifold, let there exist two distinct spatial hypersurfaces $S_1, S_2$ (foliations) such that every future-directed timelike geodesic starting at $S_1$ must reach $S_2$, and every past-directed timelike geodesic starting at $S_2$ must reach $S_1$.
    If there exist $n-1$ Killing spacelike vector fields $K_i$ such that every $K_i$ commutes with $K_j$ $\forall i \ne j$ at every point of $M$ between $S_1$ and $S_2$, and the hypersurfaces $S_1$ and $S_2$ are aligned with $K_i$, then there exists a geodesic vector field describing n-velocities of massive observers who survive a universal maximal possible amount of proper time between the foliations $S_1$ and $S_2$ (among all possible timelike observers). This field spans the entire region of the manifold between these foliations, and the n-velocity $u^\mu$ describing this field is purely timelike along its entire length.
\end{lemma}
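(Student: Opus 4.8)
The plan is to mimic the Schwarzschild construction of Section~\ref{section:framefield}, but now driven entirely by the abstract symmetry data. I would proceed in four stages: build the candidate vector field pointwise from the orthogonal complement of the Killing directions, show it is timelike and geodesic using only the Killing equation, use commutativity to straighten everything into adapted coordinates that exhibit the spanning and the ``purely timelike'' property, and finally transport the relative-Lorentz-factor argument of Theorem~\ref{theorem:maximisation_principle} to obtain maximality and its universality.

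First I would fix a point $p$ in the region $R$ bounded by $S_1$ and $S_2$ and consider the subspace $V_p = \mathrm{span}\{K_1(p),\dots,K_{n-1}(p)\} \subset T_pM$. Because the $K_i$ are spacelike and aligned with the spatial (hence spacelike) foliations, I would argue that they stay linearly independent and that $V_p$ is a non-degenerate $(n-1)$-dimensional spacelike subspace throughout $R$. In signature $(1,n-1)$ the $g$-orthogonal complement $V_p^\perp$ is then one-dimensional and timelike, so there is a unique future-directed unit vector $u(p)\in V_p^\perp$, giving a smooth unit timelike field $u$ with $g(u,u)=-1$ and $g(u,K_i)=0$ for all $i$. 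By construction $u$ is orthogonal to every symmetry direction, which is exactly the statement that it is ``purely timelike''.

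Next I would show the integral curves of $u$ are geodesics. Since $g(u,K_i)\equiv 0$ on $R$, differentiating along $u$ gives $0 = u\,g(u,K_i) = g(\nabla_u u, K_i) + g(u,\nabla_u K_i)$, and the last term $u^\mu u^\nu \nabla_\nu K_{i\,\mu}$ vanishes because $\nabla_\nu K_{i\,\mu}$ is antisymmetric (Killing's equation) while $u^\mu u^\nu$ is symmetric; hence $g(\nabla_u u, K_i)=0$. Normalisation gives $g(\nabla_u u, u)=\tfrac{1}{2} u\,g(u,u)=0$. As $\{u,K_1,\dots,K_{n-1}\}$ is a basis of $T_pM$ and $g$ is non-degenerate, $\nabla_u u = 0$: the field is geodesic, and each such observer carries the conserved charges $K_i^\mu u_\mu = 0$, the direct analogue of $e=L=0$. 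Commutativity of the $K_i$ then lets me simultaneously straighten them into coordinate fields $K_i=\partial_i$; the remaining transverse coordinate is the single timelike one, so $u$ has a component only along it. This displays $u$ as spanning all of $R$ and confirms it is purely timelike along its whole length, and the hypothesis that every future-directed timelike geodesic from $S_1$ reaches $S_2$ guarantees each integral curve of $u$ runs from $S_1$ to $S_2$.

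Finally, for maximality I would reuse the computation in Theorem~\ref{theorem:maximisation_principle}: for an arbitrary timelike observer I parametrise its worldline by the optimal observer's proper time $\tau_{\mathrm{opt}}$ (well defined since the transverse time function is monotone), note the instantaneous boost to the optimal frame has Lorentz factor $\gamma\ge 1$, so $d\tau/d\tau_{\mathrm{opt}} = \gamma^{-1}\le 1$, and integrate to get $\tau \le \tau_{\mathrm{opt}}$ with equality only for the optimal field itself. Universality follows because the isometries generated by the $K_i$ map $S_1,S_2$ to themselves and carry optimal geodesics to optimal geodesics while preserving arc length, so every integral curve of $u$ shares one common maximal proper-time length. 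The main obstacle I anticipate is not the geodesic identity, which is short, but the global regularity of the frame field: one must rule out points where the $K_i$ degenerate or where $V_p$ fails to be spacelike (symmetry axes, fixed points of the flows, or horizons), and one must make precise the transitivity of the isometry action needed for the \emph{universality} of the maximal time. These are exactly the places where the alignment and reachability hypotheses on $S_1,S_2$ must be used carefully.
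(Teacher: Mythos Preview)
Your proposal is correct and reaches the same conclusion, but the route differs from the paper's. The paper does not build $u$ pointwise as the normal to $\mathrm{span}\{K_i\}$ nor prove geodesicity via the Killing-equation contraction you use; instead it invokes the commuting Killing fields to pass at once to adapted coordinates with metric $ds^2=-g_{tt}(t)\,dt^2+\sum_i g_{ii}(t)\,(dx^i)^2$, writes the conserved charges $C_i=g_{ii}u^i$, and computes the proper time of any timelike geodesic as the single integral $\tau=\int dt\,\sqrt{g_{tt}}\big/\sqrt{1+\sum_i C_i^2 g_{ii}^{-1}}$, from which maximality at $C_i=0$ and the form $u^\mu=(\sqrt{g_{tt}^{-1}},\vec 0)$ are read off by inspection. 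Your argument is more coordinate-free and cleanly separates the three ingredients (purely timelike, geodesic, maximal), and your $\gamma^{-1}\le 1$ step covers non-geodesic competitors directly, whereas the paper's integral formula, taken literally, compares only geodesics. Conversely, the paper's calculation is shorter and makes the universality of the maximal time manifest, since the integral depends only on $t$. The regularity issues you flag---possible degeneracy of the $K_i$ or failure of $V_p$ to be spacelike---are exactly what the paper absorbs into its ``without loss of generality'' choice of the diagonal metric; neither proof addresses them beyond the stated hypotheses.
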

\begin{proof}
    Without loss of generality, we begin by assuming we are endowed with a metric 
    \begin{equation}\label{eq:general_metric_line_element}
        ds^2 = - g_{tt}(t)dt^2 + \sum_{i=1}^n g_{ii}(t) (dx^i)^2,
    \end{equation}
    where $\partial_t$ is a purely timelike vector, and $\partial_i$ are $n-1$ spacelike vectors aligned such that $\partial_i$ is parallel to $K_i$. The metric components can thus only depend on the $t$ coordinate.

    In this metric, each Killing vector $K_i$ admits a constant of geodesic motion (the Einstein summation convention not applying for $i$ in the following equations)
    \begin{equation}\label{eq:constant_general}
        C_i = u^\mu (K_i)_\mu = g_{ii}(t) u^i (K_i)^i,
    \end{equation}
    which after assuming normalisation of the Killing vectors gives
    \begin{equation}\label{eq:u_i_general}
        u^i = \frac{C_i}{g_{ii}(t)}.
    \end{equation}

    Thus, our n-velocity is completely determined as (from normalisation $u^\mu u_\mu = -1$)
    \begin{equation}
        u^t = \sqrt{g^{-1}_{tt}(t)} \sqrt{1+ \sum_{i=1}^n C_i^2 g^{-1}_{ii}(t)}.
    \end{equation}

    In this metric, the proper time can be calculated as
    \begin{equation}
        \tau = \int_{\tau_0}^{\tau_{\textrm{max}}} d\tau = \int_{\tau_0}^{\tau_{\textrm{max}}} d\tau \frac{dt}{dt} = \int_{\tau_0}^{\tau_{\textrm{max}}} \frac{dt \sqrt{g_{tt}(t)}}{\sqrt{1+ \sum_{i=1}^n C_i^2 g^{-1}_{ii}(t)}},
    \end{equation}
    where $\tau_0$ and $\tau_{\textrm{max}}$ correspond to points on $S_1$ and $S_2$ respectively.

    This is maximalised when $\forall i: C_i = 0$, from which we know that
    \begin{equation}
        u^\mu = (\sqrt{g^{-1}_{tt}(t)}, \vec{0}).
    \end{equation}

    This is a timelike geodesic, and by symmetries exactly one unique worldline described by this geodesic passes through every point in the region between $S_1$ and $S_2$ on the manifold $M$. 
\end{proof}

Stated simply, the lemma \ref{theorem:optimal_observer_field_existence} shows that if we have $n-1$ pointwise commuting Killing vector fields, then the entire region of spacetime is spanned with timelike geodesics of optimal observers. They are parallel in the sense of parallel transport along spatial hypersurfaces. They never cross, and have equal total length between the foliations $S_1$ and $S_2$.

Without these symmetries, one can imagine a situation where the region of the manifold is not fully covered by optimal observer geodesics; or, equally disastrously for any attempts at figuring out how to use an engine, nearby optimal geodesics may not universally have the same elapsed proper time.

With a frame field constructed this way, we can generalise the result for maximising time using rockets.

\begin{theorem}[Generalised principle for delaying an inevitable cataclysmic event]\label{theorem:principle_of_cataclysmic_survival}
    Let $M$ be a Lorentzian $(1, n-1)$ manifold, and $P$ a point on the $n-1$ dimensional spatial hypersurface $S_1$ on this manifold. Let $S_2$ be another $n-1$ dimensional spatial hypersurface such that it must be reached by every future-directed worldline starting from hypersurface $S_1$ (and every past-directed worldline starting on $S_2$ must reach $S_1$). Let there exist $n-1$ spacelike commuting Killing vector fields at every point in the region between $S_1$ and $S_2$.

    Then the proper time $\tau_{\textrm{a}}$ for the rocket equipped observer starting at point $P$ is maximalised by thrusting with their engines directly opposite the 3-velocity in relation to the optimal observer frame field from lemma \ref{theorem:optimal_observer_field_existence}.
\end{theorem}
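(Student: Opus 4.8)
The plan is to transplant the argument of Theorem~\ref{theorem:maximisation_principle} wholesale, replacing the explicit Schwarzschild optimal observer \eqref{eq:u_opt} with the frame field whose existence Lemma~\ref{theorem:optimal_observer_field_existence} guarantees. First I would invoke the lemma to pass to the adapted chart of \eqref{eq:general_metric_line_element}, in which $\partial_t$ is timelike, the $\partial_i$ are aligned with the commuting Killing fields $K_i$, and the foliations $S_1,S_2$ are level sets of $t$. The lemma supplies the optimal geodesics $u_{\textrm{opt}}^\mu=(\sqrt{g_{tt}^{-1}(t)},\vec 0)$, which fill the region, never intersect, and---crucially---accumulate one common total proper time $\tau_{\textrm{max}}$ between $S_1$ and $S_2$. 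Because $\partial_t$ is timelike and $S_1,S_2$ are its level sets, $t$ is strictly monotone along every future-directed timelike curve; hence both the accelerated worldline and the optimal observers can be reparametrised by the same fixed interval $\tau_{\textrm{opt}}\in(0,\tau_{\textrm{max}})$. Securing this common, control-independent endpoint is exactly what the lemma is for, and it is what makes the integration limits below rigid.

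Next I would fix an arbitrary point of the accelerated worldline and compare the observer's instantaneous comoving inertial frame with the (geodesic, hence inertial) optimal frame through the same point. On an infinitesimal neighbourhood the metric is Minkowskian in both, so the frames are related by a genuine Lorentz boost, whose Lorentz factor $\gamma=-g_{\mu\nu}u^\mu u_{\textrm{opt}}^\nu$ is minus the inner product of the two unit timelike four-velocities, giving $d\tau_{\textrm{opt}}/d\tau=\gamma$. Summing over the region yields
\begin{equation}
    \tau_{\textrm{a}}=\int_0^{\tau_{\textrm{max}}}\gamma^{-1}\,d\tau_{\textrm{opt}},
\end{equation}
with limits held fixed by the first paragraph, so the task collapses to minimising $\gamma$ pointwise. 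Writing $\gamma=(1-|\vec v|^2)^{-1/2}\ge1$ for the 3-velocity $\vec v$ in the optimal frame, the same flat-space kinematics that produced \eqref{eq:dgamma_dr} gives $d\gamma/d\tau_{\textrm{opt}}=\gamma^3\,\vec v\cdot\vec a$; with $|\vec a|\le\alpha$ the most negative instantaneous rate is reached precisely when $\vec a$ is anti-parallel to $\vec v$ at full magnitude, which is the claimed control law.

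The hard part will be upgrading this pointwise statement---``make $\dot\gamma$ as negative as possible now''---into the global claim that the integral of $\gamma^{-1}$ is genuinely maximised, since Theorem~\ref{theorem:maximisation_principle} argues only pointwise. I would close the gap with a comparison (Gr\"onwall-type) argument: the extremal rate obeys $d\gamma/d\tau_{\textrm{opt}}\ge-\alpha\,\gamma^2\sqrt{\gamma^2-1}=:-f(\gamma)$ with $f\ge0$, and by the comparison theorem for differential inequalities the control saturating this bound produces the pointwise-smallest $\gamma(\tau_{\textrm{opt}})$ among all admissible controls sharing the same initial value; hence the largest integrand $\gamma^{-1}$ at every $\tau_{\textrm{opt}}$, and therefore the largest integral. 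Two checks then finish the proof: once $\gamma=1$ one has $\vec v=0$, so no thrust can decrease $\gamma$ and any thrust would only raise it, recovering the ``thrust until the 4-velocity matches $u_{\textrm{opt}}$, then coast'' prescription; and the decomposition of $\vec a$ onto the $\partial_i$ axes reduces to the flat-space rule exactly as in the computation following Theorem~\ref{theorem:maximisation_principle}, because the boost relating the two frames is parallel to $\vec v$.
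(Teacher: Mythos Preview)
Your proposal is correct and follows essentially the same route as the paper: invoke Lemma~\ref{theorem:optimal_observer_field_existence} to obtain the optimal frame field and the adapted chart, reduce the accelerated proper time to $\tau_{\textrm{a}}=\int_0^{\tau_{\textrm{max}}}\gamma^{-1}\,d\tau_{\textrm{opt}}$ with fixed limits, and then repeat the pointwise $d\gamma/d\tau_{\textrm{opt}}=\gamma^3\,\vec v\cdot\vec a$ argument from Theorem~\ref{theorem:maximisation_principle}. The paper's proof is in fact terser than yours---it simply says ``the proof proceeds exactly as for the Schwarzschild case''---so your Gr\"onwall-type comparison step, which upgrades the instantaneous statement ``make $\dot\gamma$ most negative'' to the global claim ``the resulting $\gamma(\tau_{\textrm{opt}})$ is pointwise minimal among all admissible controls,'' is an addition that actually closes a gap the paper leaves implicit in both theorems.
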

\begin{proof}
    By the construction from lemma \ref{theorem:optimal_observer_field_existence}, we have a field of parallel geodesics that maximalise time between $S_1$ and $S_2$.

    Thanks to the spatial symmetries, we can parallel transport each instantaneous frame of the accelerated observer along spatial axes, and thus reduce the problem to a one dimensional integral along a single optimal observer geodesic with no ambiguity:
    \begin{equation}
        \tau_{\textrm{a}} = \int_{0}^{\tau} \gamma^{-1} d\tau,
    \end{equation}
    where the $\gamma$ parameter is a Lorentz boost factor between the instantaneous frame of the accelerated observer and the frame of the optimal observer. 

    Henceforth, the proof proceeds exactly as for the Schwarzschild case in theorem \ref{theorem:maximisation_principle}.
\end{proof}

\begin{corollary}[Induced submanifold]
    The principle works for worldlines constrained to an induced submanifold that fulfills the requirements of theorem \ref{theorem:principle_of_cataclysmic_survival}.
\end{corollary}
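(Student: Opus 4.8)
The plan is to treat the induced submanifold as a Lorentzian manifold in its own right and to apply Theorem \ref{theorem:principle_of_cataclysmic_survival} to it verbatim. The crucial observation is that every ingredient of the proofs of Lemma \ref{theorem:optimal_observer_field_existence} and Theorem \ref{theorem:principle_of_cataclysmic_survival} is intrinsic: they reference only the metric, its commuting Killing fields, the arc-length functional, and the Lorentz boost relating two frames sharing a point. None of these refer to any embedding into a higher-dimensional ambient space. So if $N \subset M$ is a submanifold carrying the pulled-back metric $\hat g = \iota^* g$, with $\iota : N \hookrightarrow M$ the inclusion, and if $(N, \hat g)$ satisfies the hypotheses of the theorem --- Lorentzian signature $(1, n'-1)$, two spatial hypersurfaces with the mutual inevitability property, and $n'-1$ pointwise-commuting spacelike Killing fields of $\hat g$ --- then the theorem applies on $N$ directly and yields the stated conclusion.

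First I would record that arc length is intrinsic: for any curve $\gamma$ whose image lies in $N$, the proper time computed with $\hat g$ equals that computed with $g$, since $\hat g$ is defined precisely so that $\hat g(\dot\gamma,\dot\gamma) = g(\dot\gamma,\dot\gamma)$ for velocities tangent to $N$. Hence the optimisation ``maximise $\tau$ over worldlines in $M$ that lie in $N$'' is literally the same variational problem as ``maximise $\tau$ over worldlines of the Lorentzian manifold $(N, \hat g)$.'' Next I would invoke Lemma \ref{theorem:optimal_observer_field_existence} on $(N, \hat g)$ to construct the optimal-observer frame field, and then rerun the argument of Theorem \ref{theorem:principle_of_cataclysmic_survival}, concluding that the confined proper time $\tau_{\textrm{a}} = \int \gamma^{-1} d\tau$ is maximised by thrusting opposite to the 3-velocity measured in that frame, where now all boosts, velocities, and accelerations are taken within $N$.

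The main obstacle is bookkeeping of the acceleration budget, since the engine delivers a proper 3-acceleration of magnitude $\alpha$ in the \emph{ambient} instantaneous frame, whereas the theorem on $N$ sees only the tangential (intrinsic) acceleration $\hat\nabla_{\dot\gamma}\dot\gamma$. These coincide precisely when the second fundamental form of $N$ in $M$ vanishes, i.e.\ when $N$ is totally geodesic; in that case every $\hat g$-geodesic is a $g$-geodesic, the optimal observers of $N$ are genuine optimal observers of $M$, and the whole budget $\alpha$ is available for tangential steering, so the principle transfers with no loss. I would stress that this is exactly the situation exploited in the Schwarzschild analysis: the equatorial plane $\theta = \pi/2$ is the fixed-point set of the reflection isometry $\theta \mapsto \pi - \theta$, hence totally geodesic, which is why confining the problem there in Section \ref{section:framefield} costs nothing. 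For a non-totally-geodesic $N$ one must either restrict the comparison class to worldlines that remain on $N$ and treat the normal constraint force as supplied externally (a bead-on-a-wire idealisation), or charge separately the part of $\alpha$ spent against the second fundamental form; I expect the clean statement intended by the corollary is the totally-geodesic one, and I would state that hypothesis explicitly.
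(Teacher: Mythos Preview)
Your approach is correct and matches the paper's: the paper gives no formal proof of this corollary at all, treating it as immediate and merely remarking afterward that the Schwarzschild equatorial restriction is an instance of it --- which is precisely your ``apply Theorem \ref{theorem:principle_of_cataclysmic_survival} to $(N,\hat g)$ as a Lorentzian manifold in its own right'' strategy. Your careful discussion of the acceleration-budget issue and the totally-geodesic hypothesis (and the observation that the equatorial plane is the fixed-point set of a reflection isometry) goes well beyond anything the paper addresses and is a genuine addition of rigour.
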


This means we can use it when we have some constraints on our motion, as long as the submanifold of the constrained motion has a Killing vector field along each spatial dimension. In fact, this is precisely what we did in the Schwarzschild case -- we first used spherical symmetry to restrict motion to the equator, at which point all the remaining Killing vector fields are spacelike and commute with each other.

\begin{corollary}[Locality]\label{corollary:principle_of_cataclysmic_survival_locality}
    The same applies when the hypersurface is cataclysmic only for a subset of the manifold, as long as the entire timelike future of the rocket equipped observer event is covered by the optimal frame field.
\end{corollary}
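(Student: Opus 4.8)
The plan is to exploit the fact that the optimisation underlying Theorem~\ref{theorem:principle_of_cataclysmic_survival} is inherently \emph{causal}: every step in its proof references only quantities defined along the accelerated worldline and the optimal frame field in an arbitrarily small neighbourhood of it. Since the rocket-equipped observer starts at the event $P$ and moves along a future-directed timelike curve, its entire subsequent worldline is confined to the chronological future $I^+(P)$. Hence neither the pointwise boost between the instantaneous frame and the optimal frame, nor the Lorentz factor $\gamma = g_{\mu\nu}u^\mu u_{\textrm{opt}}^\nu$, nor the reduced proper-time integral can depend on anything outside $I^+(P)$. The global foliation hypotheses invoked in Lemma~\ref{theorem:optimal_observer_field_existence} and Theorem~\ref{theorem:principle_of_cataclysmic_survival} are therefore stronger than necessary; only their restriction to the causal future of $P$ is ever used.

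First I would restrict the whole construction to $R = I^+(P)$. By assumption this region is entirely covered by the optimal frame field, so through every event the observer can possibly reach there passes a unique optimal geodesic. Re-running the symmetry argument of Lemma~\ref{theorem:optimal_observer_field_existence} within $R$, these geodesics remain mutually parallel under spatial parallel transport and share a common elapsed proper time between the relevant level sets. This is exactly the structure that licenses reparametrising the accelerated worldline by $\tau_{\textrm{opt}}$ and writing
\begin{equation}
    \tau_{\textrm{a}} = \int \gamma^{-1}\, d\tau_{\textrm{opt}}
\end{equation}
along the segment from $P$ to the terminal hypersurface, with no ambiguity, precisely as in the proof of Theorem~\ref{theorem:principle_of_cataclysmic_survival}.

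Next I would dispose of the weakened ``cataclysmic only for a subset'' assumption. Even if the terminal hypersurface $S_2$ fails to be inevitable for worldlines originating elsewhere in $M$, the corollary presupposes that $P$ lies in the subset for which $S_2$ \emph{is} inevitable; the observer's future worldline therefore terminates on $S_2$ and supplies a well-defined upper limit for the integral. With the covering condition furnishing the frame field and the localised inevitability furnishing the endpoint, every ingredient of the earlier proof is present on $R$, and pointwise minimisation of $\gamma$ --- achieved, as before, by thrusting with maximal magnitude antiparallel to the 3-velocity measured in the optimal frame --- maximises $\tau_{\textrm{a}}$.

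The step I expect to be the main obstacle is verifying that ``covered by the optimal frame field'' is genuinely strong enough to reproduce, \emph{within} $R$, the equal-elapsed-proper-time and non-crossing properties that the global Lemma~\ref{theorem:optimal_observer_field_existence} derives from the spacelike Killing flows mapping one optimal geodesic onto another between $S_1$ and $S_2$. One must check that these flows still act usefully on $R$ --- that the relevant geodesic segments in $I^+(P)$ are carried onto one another, or equivalently that the covering assumption already encodes this homogeneity --- so that the reduction to the one-dimensional $\gamma$-integral stays valid. Once that is secured, the remainder of the argument is a verbatim repetition of the boost-decomposition reasoning from Theorem~\ref{theorem:maximisation_principle}, and the conclusion follows.
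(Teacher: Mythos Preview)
The paper provides no proof at all for this corollary; it is stated as an immediate consequence and left without argument. Your proposal is therefore more detailed than anything in the paper, but it captures precisely the intended reasoning: the optimisation in Theorem~\ref{theorem:principle_of_cataclysmic_survival} is pointwise along the worldline, and the worldline lives in $I^+(P)$, so only the frame-field data on that region ever enter. Your identification of the one nontrivial point---that the equal-proper-time and non-crossing properties of the optimal geodesics must persist when restricted to $I^+(P)$---is apt, and your sketch of how the spacelike Killing flows secure it is adequate for the level of rigour the paper operates at. In short, your argument is correct and constitutes the proof the paper omits.
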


\begin{corollary}[]\label{corollary:principle_of_cataclysmic_survival_past}
    A similar argument applies for the timelike past of an observer.
\end{corollary}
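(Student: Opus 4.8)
The plan is to reduce the past statement to Theorem~\ref{theorem:principle_of_cataclysmic_survival} by invoking a time-reversal involution, so that essentially no new computation is required. First I would fix the precise meaning of the claim: we are given an event $P$ and a past cataclysmic hypersurface $S_0$ such that every past-directed timelike worldline issuing from $P$ must reach $S_0$, and we ask which thrust program maximises the proper time elapsed along a worldline running from $S_0$ up to $P$. Equivalently, reverse the time orientation of $M$: a spatial hypersurface stays spatial, the roles of the initial and terminal foliations are interchanged, and the past-reachability condition on $S_0$ becomes exactly the future-reachability hypothesis of Theorem~\ref{theorem:principle_of_cataclysmic_survival}, with $S_0$ playing the role of $S_1$ and a neighbourhood of $P$ the role of $S_2$.

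Next I would verify that every ingredient used upstream is insensitive to time orientation. The $n-1$ commuting spacelike Killing fields $K_i$ satisfy the Killing equation and the commutator relations $[K_i,K_j]=0$ independently of the choice of time arrow, so the hypotheses of Lemma~\ref{theorem:optimal_observer_field_existence} hold verbatim in the reversed picture. Since a timelike geodesic is carried to a timelike geodesic by $\tau\mapsto-\tau$ (both the second-derivative term $d^2x^\mu/d\tau^2$ and the quadratic Christoffel term are even, so the geodesic equation is preserved), the optimal observer frame field of Lemma~\ref{theorem:optimal_observer_field_existence} is the very same field, now regarded as emanating from $S_0$; its n-velocity is still purely timelike and it still foliates the region with equal-length, non-crossing geodesics.

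With the hypotheses transported, the core estimate carries over unchanged. Parallel transporting the instantaneous frame of the accelerated observer along the spatial Killing directions again collapses the optimisation to the one-dimensional integral $\tau_{\mathrm a}=\int\gamma^{-1}\,d\tau_{\mathrm{opt}}$, with $\gamma\ge 1$ the boost factor between the accelerated frame and the optimal frame, exactly as in the proof of Theorem~\ref{theorem:maximisation_principle}. Minimising $\gamma$ pointwise via $d\gamma/d\tau_{\mathrm{opt}}=\gamma^3\,\vec v\cdot\vec a$ then forces the 3-acceleration to be antiparallel to the 3-velocity measured in the optimal frame, which is the assertion for the past.

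The one point that needs genuine care—and which I expect to be the main obstacle—is bookkeeping the behaviour of the control variables under reversal: under $\tau\mapsto-\tau$ the 3-velocity relative to the optimal frame is odd while the 3-acceleration $a^\mu$ is even, so I must check that the prescription ``thrust antiparallel to $\vec v$'' is genuinely invariant rather than flipping sign. Since both the measured velocity and the direction one is instructed to oppose reverse together, the prescription is preserved, and the physical content is the time-symmetric statement that one should always thrust so as to match the locally optimal free-fall frame. I would also confirm that the reachability and foliation hypotheses are themselves symmetric—they are, since Theorem~\ref{theorem:principle_of_cataclysmic_survival} already assumes both the future and past conditions—so that no extra assumption is smuggled in by the reversal.
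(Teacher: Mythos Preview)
Your proposal is correct, and in fact considerably more detailed than what the paper provides: the paper states this corollary without proof, treating it as self-evident from the structure of Theorem~\ref{theorem:principle_of_cataclysmic_survival}. Your time-reversal reduction is a sound and explicit way to justify the claim; the checks you perform (invariance of the Killing structure, evenness of the geodesic equation, and the parity bookkeeping showing that both $\vec v$ and the prescribed thrust direction flip together so that the instruction ``oppose $\vec v$'' is preserved) are exactly the points one would want verified, and nothing is missing.
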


In the case where the metric does take the form \eqref{eq:general_metric_line_element}, we can also quite easily obtain a general expression for $\frac{dC_i}{dt}$.

\begin{theorem}[Rate of change of $C_i$]
    When the metric takes the form \eqref{eq:general_metric_line_element} and the n-velocity along worldline $u^\mu$ has only one non-zero spacelike component \eqref{eq:u_i_general}, then the rate of change of $C_i$ \eqref{eq:constant_general} with respect to the timelike coordinate $t$ under proper acceleration $\alpha$ is
    \begin{equation}
        \frac{dC_i}{dt} = - \alpha \sqrt{g_{ii}} \sqrt{g_{tt}^{-1}}.
    \end{equation}
\end{theorem}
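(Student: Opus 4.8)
The plan is to reproduce, in the general diagonal setting \eqref{eq:general_metric_line_element}, the very same differentiate-along-the-worldline argument that produced \eqref{eq:dedr} and \eqref{eq:dLdr}, now carried out once and for all with arbitrary metric functions. Writing $C_i = u^\mu (K_i)_\mu$ with each $K_i = \partial_i$ taken coordinate-aligned, I would first take the absolute derivative along the (now accelerated) worldline,
\begin{equation}
    \frac{dC_i}{d\tau} = a^\mu (K_i)_\mu + u^\mu u^\nu \nabla_\nu (K_i)_\mu.
\end{equation}
The second term vanishes because $u^\mu u^\nu$ is symmetric while $\nabla_\nu (K_i)_\mu$ is antisymmetric by Killing's equation, so the identity collapses to $dC_i/d\tau = a^\mu (K_i)_\mu = g_{ii}\,a^i$, using the diagonality of the metric and the fact that $C_i$ is a scalar on the left.

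Next I would pin down $a^i$. The hypothesis that $u^\mu$ carries only one nonzero spacelike component, together with the antiparallel-thrust prescription of Theorem \ref{theorem:principle_of_cataclysmic_survival}, confines both $u^\mu$ and $a^\mu$ to the two-dimensional Lorentzian plane spanned by $\partial_t$ and $\partial_i$. In that plane I would impose the three algebraic constraints \eqref{eq:u_norm}, \eqref{eq:a_orthogonal} and \eqref{eq:a_norm}: solve orthogonality for $a^t$, substitute it into the norm \eqref{eq:a_norm}, and eliminate the resulting bracket with the normalisation $g_{tt}(u^t)^2 - g_{ii}(u^i)^2 = 1$. This is the one \emph{genuinely computational} step, and it yields
\begin{equation}
    a^i = \pm\,\alpha\, u^t \sqrt{\frac{g_{tt}}{g_{ii}}},
\end{equation}
which is exactly the structure already met in \eqref{eq:dedr_rhs} and \eqref{eq:dLdr_rhs}. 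Substituting back gives $dC_i/d\tau = g_{ii}\,a^i = \pm\,\alpha\, u^t \sqrt{g_{tt}\,g_{ii}}$, and converting from proper time to the timelike coordinate through $u^t = dt/d\tau$ cancels the explicit $u^t$, leaving $dC_i/dt = -\alpha\sqrt{g_{tt}}\,\sqrt{g_{ii}}$ once the sign is fixed so that thrust opposing the motion lowers $C_i$.

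The hard part here is not the calculus but the bookkeeping: fixing a single consistent sign, and keeping the positive function $g_{tt}(t)$ of \eqref{eq:general_metric_line_element} carefully distinct from the signed metric component so that the factors combine correctly. I would guard against slips with two independent specialisations to the Schwarzschild interior. Taking the timelike coordinate to be $r$, the $t$-direction gives $g_{tt} = (\tfrac{2M}{r}-1)^{-1}$ and $g_{ii} = \tfrac{2M}{r}-1$, whose product is unity and recovers $de/dr = -\alpha$ of \eqref{eq:dedr}; the $\phi$-direction gives $g_{ii} = r^2$ and recovers $dL/dr = -\alpha r / \sqrt{\tfrac{2M}{r}-1}$ of \eqref{eq:dLdr}. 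These two checks, drawn from the worked special cases, fix the final form unambiguously and would be my primary sanity test before trusting the general expression.
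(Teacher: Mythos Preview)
Your approach is exactly the one the paper sketches: take the absolute derivative of $C_i$, kill the Killing term by antisymmetry, then solve the three algebraic constraints \eqref{eq:u_norm}, \eqref{eq:a_orthogonal}, \eqref{eq:a_norm} in the $(\partial_t,\partial_i)$ plane to express $a^i$ in terms of $\alpha$ and $u^t$, and finally trade $d\tau$ for $dt$. The paper's proof is nothing more than a pointer back to the derivations of \eqref{eq:dedr} and \eqref{eq:dLdr}, and you have carried those through verbatim in the general diagonal metric.

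One point worth flagging: your computation lands on $dC_i/dt = -\alpha\sqrt{g_{tt}}\sqrt{g_{ii}}$, whereas the theorem as stated reads $-\alpha\sqrt{g_{ii}}\sqrt{g_{tt}^{-1}}$. Your two Schwarzschild specialisations are decisive here --- with the timelike coordinate $r$ one has $g_{tt}=(\tfrac{2M}{r}-1)^{-1}$, and only your form reproduces $de/dr=-\alpha$ and $dL/dr=-\alpha r/\sqrt{\tfrac{2M}{r}-1}$, while the stated formula would give $-\alpha(\tfrac{2M}{r}-1)$ and $-\alpha r\sqrt{\tfrac{2M}{r}-1}$ respectively. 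Trust your derivation and your checks; the exponent on $g_{tt}$ in the displayed statement appears to be a typo.
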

\begin{proof}
    Just as in the case of the derivation of \eqref{eq:dedr} or \eqref{eq:dLdr}, we take the absolute derivative and use \eqref{eq:gen_eom}, \eqref{eq:a_norm}, \eqref{eq:u_norm} and \eqref{eq:a_orthogonal} to represent $a^i$ in terms of $\alpha$ and $u^i$.
\end{proof}

There are quite a few interesting spacetimes that fulfill the requirements. For instance, in the Big Bang model assuming a cataclysm at some $t_{\textrm{cataclysm}}$ such as a Big Crunch or a Big Rip, there is a class of observers going along the Hubble flow that spans the entire spacetime \cite{carroll2004spacetime}. This means survival is maximised the same way. So, a civilization on a rogue planet or colony ship moving quickly in relation to the Hubble flow would desire to slow their speed down when the universe is approaching its' end.

We can also work backwards -- for instance, towards the naked singularity of the Big Bang. The `oldest' structures in the universe are those aligned with the Hubble flow; objects that move fast in relation to it may be `younger'.

It's also worth noting that the `cataclysm' need not necessarily be cataclysmic: it can simply be a hypersurface the crossing of which is inevitable. As long as the entire timelike future up to the hypersurface can be covered with the frame field with necessary symmetries, the principle applies. This is a valuable geometrical tool in situations where worldlines are accelerated, and where we don't have a single definite initial/terminal event in spacetime. In these circumstances, geodesic maximisation of proper time need not necessarily apply, as illustrated by the Schwarzschild example discussed in this paper. However, a field of optimal observers may be possible to construct, and thus we obtain a geometric tool of analysis.

For instance, the Reissner–Nordström spacetime constrained to equatorial motion fulfills these requirements between the inner and outer horizon \cite{carroll2004spacetime}. Similarly, a chosen spatial foliation $t = const$ in Minkowski spacetime or a chosen cosmological time $t = const$ in the Big Bang model can be the target hypersurface. The Bianchi type I cosmological models and the Kasner metric \cite{misner1973gravitation} also fulfill these requirements.

\subsection{Future avenues of research}

The principle in theorem \ref{theorem:principle_of_cataclysmic_survival} is a valuable tool in various other spacetimes, some of which have been elucidated in this paper.

One could also pose a similar question for the Kerr spacetime. While in the Kerr spacetime there exist closed timelike loops which allow arbitrary survival time mathematically, it is all but certain that the inner horizon of the Kerr spacetime is a boundary beyond which general relativity breaks down. One could, therefore, try to find what a rocket-equipped astronaut can do to extend their proper time between the outer horizon and the inner horizon. Since Kerr spacetime exhibits fewer symmetries, the behaviour can be expected to be much richer. Also, Kerr black holes are actual astrophysical objects, making the problem somewhat more practical.

\begin{acknowledgements}

The author would like to thank Adam Cieślik and Sebastian Szybka for their valuable contributions in discussions about this problem, and for their patience in listening to a physicist desperately avoiding doing algebra.

The author would also like to extend gratitude to Tristan Needham for writing the books \textit{Visual Complex Analysis} and \textit{Visual Differential Geometry and Forms} \cite{needhamvca, needhamvdgaf}. The idiosyncratic geometrical style of solving problems presented therein was the catalyst that impelled this investigation, and inspired some of the proofs.

\end{acknowledgements}

\bibliographystyle{siam}
\bibliography{bibliography}

\end{document}